\documentclass[%
 reprint,
superscriptaddress,
 amsmath,amssymb,
 aps,
]{revtex4-1}

\usepackage{graphicx}
\usepackage{dcolumn}
\usepackage{bm}
\usepackage{amsthm, braket}
\usepackage{subcaption}
\usepackage{empheq}
\usepackage{siunitx}
\usepackage{comment}

\newtheorem{theorem}{Theorem}

\usepackage{hyperref}


\begin{document}

\preprint{APS/123-QED}

\title{Current Distribution on Capacitive Electrode-Electrolyte Interfaces}

\author{Zhijie \surname{Chen}}
\email{zcchen@stanford.edu}
\affiliation{%
  Department of Electrical Engineering, Stanford University, Stanford, CA, USA
}
\affiliation{Hansen Experimental Physics Laboratory, Stanford University, Stanford, CA, USA}

\author{Lenya \surname{Ryzhik}}
\affiliation{%
 Department of Mathematics, Stanford University, Stanford, CA, USA
}

\author{Daniel \surname{Palanker}}
\affiliation{
 Department of Ophthalmology, Stanford University, Stanford, CA, USA
}
\affiliation{Hansen Experimental Physics Laboratory, Stanford University, Stanford, CA, USA}

\date{\today}

\begin{abstract}
The distribution of electric current on an electrode surface in electrolyte varies with time due to charge accumulation at a capacitive interface, as well as due to electrode kinetics and concentration polarization in the medium. Initially, the potential at the electrode-electrolyte interface is uniform, resulting in a non-uniform current distribution due to the uneven ohmic drop of the potential in the medium. Over time, however, the non-uniform current density causes spatially varying rate of the charge accumulation at the interface, breaking down its equipotentiality. We developed an analytical model to describe such transition at a capacitive interface when the current is below the mass-transfer limitation, and demonstrated that the steady distribution of the current is achieved when the current density is proportional to the capacitance per unit area, which leads to linear voltage ramp at the electrode. More specific results regarding the dynamics of this transition are provided for a disk electrode, along with an experimental validation of the theoretical result. These findings are important for many electrochemical applications, and in particular, for proper design of the electro-neural interfaces.
\end{abstract}

\maketitle


\section{Introduction}

Dynamics of the charge transfer across the electrode-electrolyte interfaces is of great importance in electrochemistry in general, and for many applications, including batteries, electroplating, electrolysis, chemical sensors and, in particular, bioelectronics. The distribution of current and voltage across such an interface is governed by multiple mechanisms, including the concentration polarization of the reactants in the medium, the kinetics of the electrode reactions, the ohmic drop in the bulk of electrolyte and charging of the electric double layer. The effects of the concentration polarization are modeled by the Warburg impedance, which is only significant at high current density when the reactant concentration is considerably affected by the mass-transfer limitation\cite{bard2001fundamentals, newman2012electrochemical}. The electrode kinetics is associated with faradaic electrochemical reactions, and modeled by the charge transfer resistance varying with voltage, which is considered linear at lower current density according to the Butler-Volmer model\cite{bard2001fundamentals, newman2012electrochemical}. The access resistance -- the ohmic drop in the medium -- is determined only by the electrode geometry and electrolyte conductivity. The electric double layer is modeled as a capacitance, where the Helmholtz plane in the electrolyte serves as the "plate" on the electrolyte side of this capacitive interface\cite{myland2005does}. The double-layer capacitance is typically on the order of $10-\SI[per-mode=symbol]{20}{\micro\farad\per\centi\metre\squared}$ for inert materials including carbon\cite{bleda2005role}, platinum\cite{pajkossy2001double} and gold\cite{gore2010hysteresis}. Additionally, some materials can exhibit a range of quasi-continuous oxidation states, enabling reversible storage of much larger amount of charge than in a typical double-layer capacitance, and therefore known as pseudocapacitance\cite{sugimoto2004evaluation, soon2007electrochemical, grupioni2002voltammetric}. Together, the double-layer capacitance and the pseudocapacitance are often called supercapacitance\cite{halper2006supercapacitors}.

Newman\cite{newman1966resistance} calculated the primary current distribution at the interface of an equipotential (EP) disk electrode. It has been pointed out that both the electrode kinetics and the mass-transfer limitation result in secondary current distributions being more uniform than the primary one\cite{newman1966current, west1989current}. When the ohmic drop in the medium is the dominant part of the impedance, the calculation of the access resistance based on the EP boundary condition from \cite{newman1966resistance} is broadly used\cite{albery1971ring, weaver1996theory, lasia2002electrochemical, boinagrov2015photovoltaic}. The transient charge redistribution within the double layer on a disk electrode made of the same material has been described for controlled potential\cite{nisancioguglu1973transient_v} and for controlled current\cite{nisancioguglu1973transient}, respectively. The frequency dispersion of such interfaces was studied in \cite{newman1970frequency}.

However, there was no theory describing the current distribution on electrodes of all geometries or with multiple surface materials. Such theory is of interest in many electrochemical applications involving capacitive coupling electrodes, and especially for neural stimulation, where various geometries and materials are used in different applications. For neural stimulation electrodes, the distribution of the electric current affects the stimulation thresholds and tissue safety, and extra care should be taken to avoid irreversible electrochemical reactions. Therefore, materials of large charge storage capacity are often used to minimize the voltage swing, and to ensure that the charge transfer is fully reversible, i.e. pseudo-capacitive.

In this study we demonstrate, for any electrode geometry, that in the absence of the concentration polarization, the steady state current distribution is achieved when the current density is proportional to the surface capacitance per unit area (PCD), where the boundary condition is not necessarily EP and the electrode potential converges to a linear ramp. For an electrode made of the same material, PCD implies uniform current density (UCD). Initially, current begins to flow at a non-uniform density from the EP surface, but over time the uneven charge accumulation at the Helmholtz plane, as well as that in the pseudocapacitance, begins to affect the voltage drop across the double layer. Such uneven potential at the Helmholtz plane rearranges the electric field in the electrolyte, and hence redistributes the current density, until the system reaches the PCD steady state.

 Note that we model the possibly changing equilibrium potential of the associated electrochemical reactions by the pseudocapacitance, while \cite{nisancioguglu1973transient_v, nisancioguglu1973transient} assumed constant reaction potentials. Therefore, in their notion, the steady state was resistive and unrelated to the surface capacitance. More recent studies \cite{myland2005does, behrend2008dynamic} assumed ideally polarizable disk electrodes -- with no faradaic reactions -- and described the transition from the primary current distribution to the steady state with finite-element models, yielding only numerical solutions. However, without analytical description, fitting the numerical results to the RC approximations provides only a limited understanding of the transition, let alone that such finite-element models are usually intractable for an arbitrary electrode geometry. Nevertheless, the results have been widely adopted in practical applications\cite{cantrell2007incorporation, myland2014excess, bieniasz2015theory, sue2015modeling}.
 
 The general model in this paper presents the transition from the initial current distribution to PCD for any electrode geometry and material composition, while considering the effects of the supercapacitance, the electrode kinetics and the ohmic drop. We develop a framework to study capacitive interfaces with sinusoidal waveforms, chronoamperometry (controlled potential) or chronopotentiometry (controlled current), with the bra-ket notation. We demonstrate the application of this framework to a disk electrode, which agrees with the previous analytical results. We also demonstrate validation of some of these results experimentally.

\section{\label{theory}The System Model}

Typically, an equivalent circuit model of the electrode-electrolyte interface with supercapacitance includes the double-layer capacitance $C_d$, the pseudocapacitance $C_p$, the Faradaic leakage resistance $R_f$, the charge transfer resistance $R_{ct}$, the Warburg impedance $Z_w$, and the access resistance $R_a$\cite{bard2001fundamentals, conway2003double}, as shown in FIG. \ref{circuit_full}. The complex Warburg impedance has a constant angle of $-\pi/4$, which is a result of the phase delay between the current and the concentration polarization, stemming from the diffusion of reactants. It is a special case of constant phase element (CPE). In more complicated models, more than one CPEs may be included, whose detailed mechanisms often remain unclear\cite{barsoukov2005impedance}.

We are interested in the conditions where the current density is below the mass-transfer limitation, and hence $Z_w$ is insignificant. The limiting current density is given by equation (28) of \cite{newman1966current} for a disk electrode. We also assume negligible Faradaic leakage. The charge transfer across the interface in this case is governed by the distributed capacitance and charge transfer resistance at the interface, and by the ohmic drop in the medium. Thereby, our circuit can be simplified to that shown in FIG. \ref{circuit_simp}. The model studied in \cite{myland2005does} is a special case when $C_p=0$ and the one in \cite{nisancioguglu1973transient_v, nisancioguglu1973transient} is when $C_p\xrightarrow{}+\infty$.

For an extended electrode, the capacitance, the charge transfer resistance and the access resistance are distributed, as illustrated in FIG.~\ref{SystemDiagram}. Each location on the interface is approximated by a discrete circuit in FIG.~\ref{circuit_simp}. Although illustrated with discrete components, our mathematical treatment makes no explicit discretization. Note that we only study half of the electrochemical cell, assuming that the current is collected on a large counter electrode infinitely far away.

\begin{figure}[ht]
    \centering
    \begin{subfigure}{.45\textwidth}
        \includegraphics[width=\textwidth]{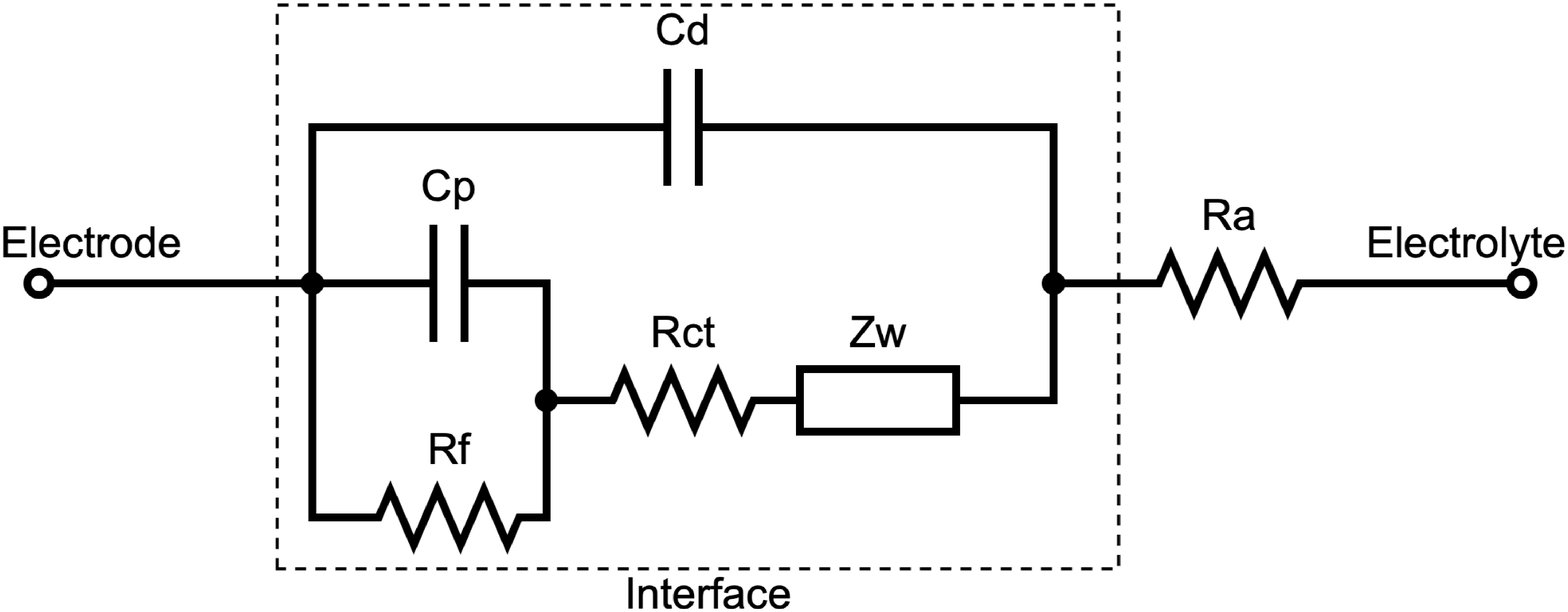}
        \caption{\label{circuit_full}}
    \end{subfigure}
    \begin{subfigure}{.45\textwidth}
        \includegraphics[width=\textwidth]{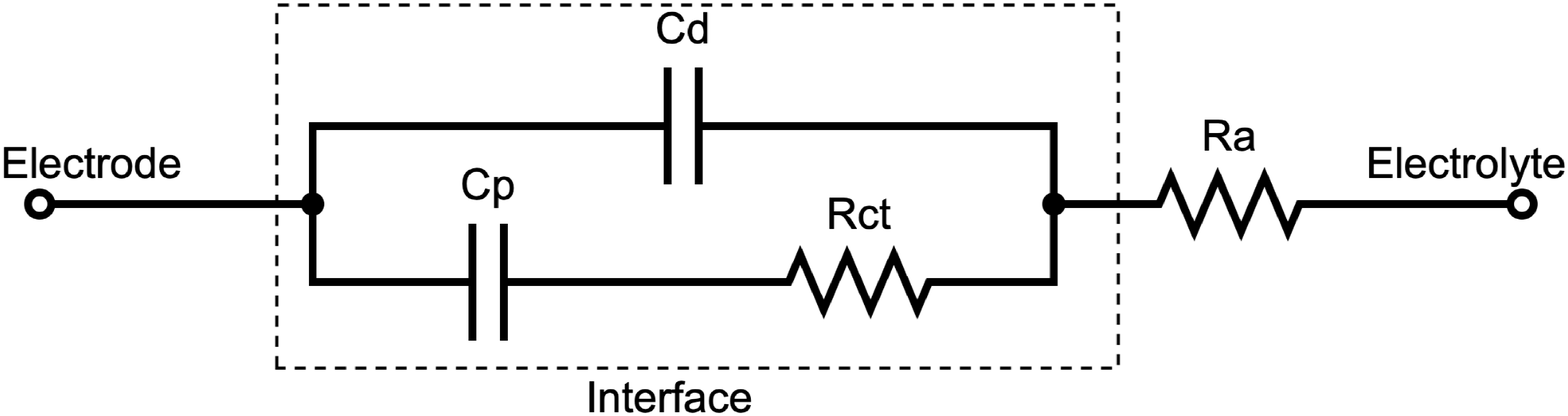}
        \caption{\label{circuit_simp}}
    \end{subfigure}
\caption{Diagrams of the equivalent circuit models of the electrode-electrolyte interface. The other electrodes of the electrochemical cell are omitted for simplicity. (a) A complete circuit model with pseudocapacitance includes the double-layer capacitor~$C_d$, the pseudocapacitance~$C_p$, the faradaic leakage resistance~$R_f$, the charge transfer resistance~$R_{ct}$, the Warburg resistance~$Z_w$ and the access resistance~$R_a$. (b) When the contributions to the total impedance from~$R_f$ and~$Z_w$ are negligible, the circuit model can be simplified. To model an ideally polarizable electrode, let $C_p=0$.}\label{circuit}
\end{figure}

In FIG. \ref{SystemDiagram}, $E$ denotes the subset of the 3-dimensional space occupied by the electrolyte. Its boundary with the electrode is denoted by $A$, and with an insulating surface, denoted by $D$. $A\bigcup D = \partial E\subset E$. Let $\Phi(\bm{r},t)$ denote the potential distribution in $E$ as a function of both the spatial variable $\bm{r}$ and time $t$. We choose $\Phi(\infty)=0$, and define $\varphi$ as the 2-dimensional restriction of $\Phi$ on $A$:
\begin{equation}
	\varphi(\bm{r}, t):=\Phi(\bm{r}, t),\quad \bm{r}\in A.
	\label{surfacePhi}
\end{equation}
The electrode is EP in its bulk, whose potential as a function of time is denoted as $V(t)$. A non-uniform potential drop across the surface, which also varies with time, is denoted by $U(\bm{r}, t)$:
\begin{equation}
	\varphi(\bm{r},t)+U(\bm{r},t) = V(t).\label{surfaceV}
\end{equation}
$U$ is the voltage across $C_d$, and we denote the potential drop across $C_p$ by $U_p(\bm{r},t)$. Let $C_p(\bm{r})$ and $C_d(\bm{r})$ be the pseudocapacitance and the double-layer capacitance per unit area on $A$, respectively. $R_{ct}{(\bm{r})}$ is the charge transfer resistance times unit area on $A$. Typically,the pseudocapacitance is much larger than the double-layer capacitance\cite{halper2006supercapacitors}. For an area where no electrochemical reactions take place, we may set $C_d=0$, $R_{ct}=0$ and $C_p$ the double-layer capacitance, so that $C_p(\bm{r})\gg C_d(\bm{r})$ for $\forall \bm{r}\in A$.

\begin{figure}[ht]
	\includegraphics[width=0.45\textwidth]{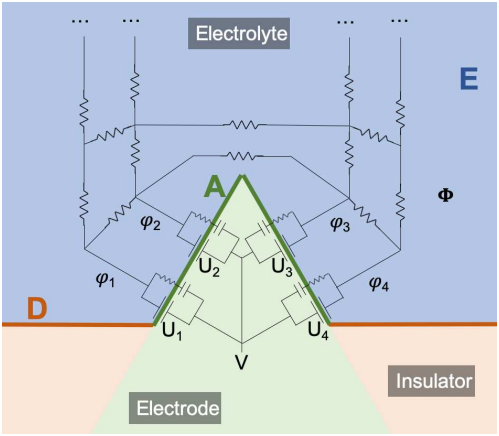}
	\caption{A schematic illustration of the system, overlaid with the discretized version of the circuit diagram. The boundary of the electrolyte $E$ consists of the insulating surface $D$ and the electrode-electrolyte interface $A$. The electrode has the same potential $V$ throughout its bulk, but the potential drop~$U$ across the interface is generally a function of the location. Therefore, the potential in $E$ just next to $A$ is also a spatially varying function $\varphi$. Different locations at $A$ have different access resistance to a return electrode at infinity.}
	\label{SystemDiagram}
\end{figure}

In $E$, the current density is
\begin{equation}
	\bm{i}(\bm{r}, t) := -\frac{1}{\rho}\nabla\Phi(\bm{r}, t),
\end{equation}
where $\rho$ is resistivity of the electrolyte. 
On $D$, insulation implies zero normal current:
\begin{equation}
	\bm{i}(\bm{r})\cdot\bm{n}(\bm{r}) = 0,\quad \bm{r}\in D,
	\label{normalI_D}
\end{equation}
where the unitary $\bm{n}(\bm{r})$ is normal to the surface at $\bm{r}$ pointing to the electrolyte side. We choose the direction of $\bm{n}(\bm{r})$ as positive for current flow, and define the normal current density on $A$:
\begin{equation}
	i(\bm{r}, t) := \bm{i}(\bm{r}, t)\cdot\bm{n}(\bm{r}), \quad \bm{r}\in A.\label{normalI_A}
\end{equation}
Only the normal current component contributes to the charge accumulation on $A$. Conceptually, we may divide $i$ into two components: the faradaic component $i_a$ (the $C_p$--$R_{ct}$ path) and $i_b$ that charges the double layer (the $C_d$ path). By definition:
\begin{equation}
	i_a(\bm{r}, t) + i_b(\bm{r}, t) = i(\bm{r}, t).
	\label{i1i2i}
\end{equation}
$i_a$ drives the voltage of the pseudocapacitance by
\begin{equation}
	i_a(\bm{r}, t) = C_p(\bm{r})\Dot{U}_p(\bm{r}, t),
	\label{dU1dt}
\end{equation}
and similarly, for $i_b$ we have:
\begin{equation}
	i_b(\bm{r}, t) = C_d(\bm{r})\Dot{U}(\bm{r}, t).
	\label{dUdt}
\end{equation}
The two current paths have the same potential drop:
\begin{equation}
	R_{ct}(\bm{r})i_a(\bm{r}, t)+U_p(\bm{r}, t) = U(\bm{r}, t).
	\label{UU1}
\end{equation}

The potential $\Phi$ satisfies the Laplace's equation:
\begin{equation}
	\Delta\Phi=0,\label{laplacePhi}
\end{equation}
with the boundary conditions (\ref{surfacePhi}), (\ref{normalI_D}) and $\Phi(\infty) \to 0$ as $\hbox{dist}(\bm{r},A)\to+\infty$. Given the boundary condition value~$\varphi$, the potential $\Phi$ is fully determined, thus so is $i$. As (\ref{laplacePhi}) and all its boundary conditions are linear, there is a linear mapping from $\varphi$ to $i$, which we denote as a linear operator $\hat{\bm{S}}$:
\begin{equation}
    \hat{\bm{S}}\varphi=i.
    \label{def_S}
\end{equation}
Because of the uniqueness of the electric field, $\hat{\bm{S}}$ is reversible. Define $\hat{\bm{R}}_a:=\hat{\bm{S}}^{-1}$. Combining (\ref{surfaceV}), (\ref{i1i2i}), (\ref{dUdt}) and (\ref{def_S}), so that (\ref{i1i2i}) gives
\begin{equation}
	i_a = \hat{\bm{S}}\varphi-C_d(V'-\Dot{\varphi}).
	\label{i1(phi)}
\end{equation}
We take the derivative of (\ref{UU1}), and use (\ref{surfaceV}) and (\ref{dU1dt}):
\begin{equation}
	R_{ct}\Dot{i}_a+C_p^{-1}i_a=V'-\Dot{\varphi}.
	\label{phii1}
\end{equation}
Combine (\ref{i1(phi)}) and (\ref{phii1}):
\begin{equation}
\begin{split}
    &\hat{\bm{R}}_aR_{ct}C_pC_d\Ddot{\varphi}+\left[\hat{\bm{R}}_aC_d+\left(\hat{\bm{R}}_a+R_{ct}\right)C_p\right]\Dot{\varphi}+\varphi\\
    =&\hat{\bm{R}}_aR_{ct}C_pC_dV''+\hat{\bm{R}}_a(C_d+C_p)V'.\label{surface_dVdt}
\end{split}
\end{equation}

At the steady state of current distribution, by (\ref{def_S}), $\varphi$ no longer changes with time:
\begin{equation}
    \Ddot{\varphi} = \Dot{\varphi} = 0.
    \label{steadyState}
\end{equation}
Thus at steady state we have
\begin{equation}
    i(\bm{r}) = \hat{\bm{S}}\varphi = R_{ct}C_pC_dV''(t)+(C_d+C_p)V'(t).
    \label{phiSS}
\end{equation}
In (\ref{phiSS}), the left hand side is time-independent, implying there is a constant $v$ so that:
\begin{equation}
    R_{ct}\dfrac{C_pC_d}{C_d+C_p}V''+V'=v.
    \label{VSS}
\end{equation}
Therefore, $V'$ converges to $v$ with time constant $R_{ct}{C_pC_d}/(C_d+C_p)$, which may vary in space. The "steady" steady state has $V''=0$, and yields
\begin{equation}
    i(\bm{r})=\left[C_d(\bm{r})+C_p(\bm{r})\right]v.
    \label{vSS}
\end{equation}
Because $v$ is constant in space, at steady state the (normal) current density on the electrode surface is proportional to the capacitance per unit area. Note that a capacitive interface requires $v\neq0$ for non-zero steady states of current distribution.

By (\ref{surface_dVdt}), the characteristic polynomial of the system is
\begin{equation}
    p(x)=x^2\hat{\bm{R}}_aR_{ct}C_pC_d+x\left[\hat{\bm{R}}_aC_d+\left(\hat{\bm{R}}_a+R_{ct}\right)C_p\right]+\hat{I},
    \label{char_pol}
\end{equation}
where $\hat{I}$ is the identity operator. As proven in Appendix \ref{pos_def}, the operator $\hat{\bm{S}}$ is positive-definite, and so is $\hat{\bm{R}}_a$. $R_{ct}$ and $C_d$ are positive functions. Since $C_p\gg C_d$, the coefficient of the first-order term of (\ref{char_pol}) is dominated by $C_p$. Therefore, we may perturb the term $\hat{\bm{R}}_aC_d$ by its magnitude to make the system more tractable. We define a perturbation operator
\begin{equation}
    \hat{\bm{P}}:=-\hat{\bm{R}}_a{\left(\hat{\bm{R}}_a+R_{ct}\right)}^{-1},
\end{equation}
whose operator norm is bounded by 1. We perturb $p(x)$:
\begin{equation}
    p(x)\approx\Tilde{p}(x):=p(x)+x\hat{\bm{R}}_aC_d\hat{\bm{P}}.\label{perturb}
\end{equation}
Let $\hat{\bm{T}}_a=\left(\hat{\bm{R}}_a+R_{ct}\right)C_p$, and\\ $\hat{\bm{T}}_b=\hat{\bm{R}}_aR_{ct}C_d{\left(\hat{\bm{R}}_a+R_{ct}\right)}^{-1}$. We have:
\begin{equation}
\begin{split}
    \Tilde{p}(x)
    =\left(x \hat{\bm{T}}_b + \hat{I}\right)\left(x \hat{\bm{T}}_a + \hat{I}\right).
\end{split}\label{char_tilde_p}
\end{equation}

From (\ref{char_tilde_p}), we know that the system has two sets of characteristic times, $\{\tau_a\}$ and $\{\tau_b\}$, corresponding to the eigenvalues of $\hat{\bm{T}}_a$ and $\hat{\bm{T}}_b$, respectively. Let $\{\Upsilon_a\}$ and $\{\Upsilon_b\}$ be the normalized (dimensionless) eigenfunctions of $\hat{\bm{T}}_a$ and $\hat{\bm{T}}_b^\dagger$, respectively. Each of the eigenfunctions corresponds to a eigenmode of $\varphi$, which is a potential distribution on $A$ that elicits a current such that $\varphi$ changes proportionally to itself. As mentioned, $\hat{\bm{R}}_a$ is positive-definite, so $\{\tau_a\}$ and $\{\tau_b\}$ are positive. Furthermore, if the surface is uniform, \textit{viz.} $R_{ct}$, $C_p$ and $C_d$ are constant, $\hat{\bm{T}}_a$ and $\hat{\bm{T}}_b$ are positive-definite and thus the eigenmodes are orthogonal within each set. Because $C_p\gg C_d$, $\{\tau_a\}$ are much larger than $\{\tau_b\}$. Thus, $\max\{\tau_a\}$ is the dominant time constant, and $\{\varphi_b\}$ decay much faster than $\{\varphi_a\}$. Different eigenmodes of the same operator have different time constants because of the shape of the eigenmodes. Intuitively, if a eigenmode oscillates more rapidly in space, more charge transfers across small distances, having lower resistance and hence happening faster. $\hat{\bm{T}}_a$ may have very small eigenvalues close to $0$ as well, but the magnitudes of the corresponding eigenmodes are also very small. This is because the eigenmodes which rapidly oscillate in space usually don't correlate with the shape of the total potential distribution. We will show an example of this in Section \ref{disk_example}.

\section{Responses to Typical Stimuli}

With the model developed in Section \ref{theory}, we study the system responses to three typical stimuli in electrochemical measurements: sinusoidal waveforms in Section \ref{sine}, chronoamperometry (controlled potential method) in Section \ref{vpulse} and chronopotentiometry (controlled current method) in Section \ref{ipulse}. With the approximation of (\ref{perturb}), (\ref{surface_dVdt}) becomes
\begin{equation}
    \begin{split}
    &\left(\hat{\bm{T}}_b\partial_t + \hat{I}\right)\left(\hat{\bm{T}}_a\partial_t + \hat{I}\right)\varphi\\
    =&\left[\left(\hat{\bm{T}}_b\partial_t + \hat{I}\right)\left(\hat{\bm{T}}_a\partial_t + \hat{I}\right)-R_{ct}C_p\partial_t-\hat{I}\right]V.\label{surface_T_dVdt}
\end{split}
\end{equation}

\subsection{\label{sine}Sinusoidal Waveforms}

A sinusoidal waveform is applied to the electrode: $V(t) = V_0e^{j\omega t}$, where $j$ is the imaginary unit. Using the time-domain Fourier transform in (\ref{surface_T_dVdt}) yields
\begin{equation}
\begin{split}
    \varphi =&{\left(j\omega \hat{\bm{T}}_a + \hat{I}\right)}^{-1}{\left(j\omega \hat{\bm{T}}_b + \hat{I}\right)}^{-1}\\
    &\left[-\omega^2\hat{\bm{T}}_b\hat{\bm{T}}_a+j\omega\hat{\bm{R}}_a(C_p+C_d)\right]V_0e^{j\omega t}.
    \label{phi_sinusoidal}
    \end{split}
\end{equation}
It is not possible to solve (\ref{phi_sinusoidal}) explicitly without assuming a specific electrode configuration, but we can estimate the impedance
\begin{equation}
\begin{split}
    \hat{\bm{Z}} = Vi^{-1}=&{\left[-\omega^2\hat{\bm{T}}_b\hat{\bm{T}}_a+j\omega\hat{\bm{R}}_a(C_p+C_d)\right]}^{-1}\\
    &\left(j\omega\hat{\bm{T}}_b + \hat{I}\right)\left(j\omega\hat{\bm{T}}_a + \hat{I}\right)\hat{\bm{R}}_a,
\end{split}
\end{equation}
at the extremes of the frequency $\omega$.

Now we use $\left\Vert\cdot\right\Vert$ to denote the Hilbert--Schmidt norm. As explained in Section~\ref{theory}, $\left\Vert\hat{\bm{T}}_a\right\Vert\gg\left\Vert\hat{\bm{T}}_b\right\Vert$. In order of magnitude,  $\left\Vert\hat{\bm{R}}_a(C_p+C_d)\right\Vert$ is close to $\left\Vert\hat{\bm{T}}_a\right\Vert$.

When $\omega\left\Vert\hat{\bm{T}}_b\right\Vert\gg1$, $\hat{\bm{Z}}\approx \hat{\bm{R}}_a$, and we have
\begin{subequations}
\begin{align}
        \varphi & = V, \label{sine_high_v}\\
        i &= \hat{\bm{S}}V. \label{sine_high_i}
\end{align}
\end{subequations}
We see from (\ref{sine_high_v}) that at high frequencies the interface is equipotential, and from (\ref{sine_high_i}) that the current is changing in phase with voltage, and the access resistance associated with the EP boundary condition is measured.

When $\omega\left\Vert\hat{\bm{T}}_a\right\Vert\ll1$, $\hat{\bm{Z}}\approx {\left[j\omega(C_p+C_d)\right]}^{-1}$, and we have
\begin{subequations}
\begin{align}
        \varphi & = j\omega \hat{\bm{R}}_a(C_p+C_d)V, \label{sine_low_v}\\
        i &= j\omega(C_p+C_d)V. \label{sine_low_i}
\end{align}
\end{subequations}
We see from (\ref{sine_low_v}) that at low frequencies, the interface is not equipotential, and from (\ref{sine_low_i}) that the current is shifted by $90^\circ$ and is proportional to the total surface capacitance $C_p+C_d$.

Now if $\omega\left\Vert\hat{\bm{T}}_a\right\Vert\gg1$ but $\omega\left\Vert\hat{\bm{T}}_b\right\Vert\ll1$, we have
\begin{equation}
    \hat{\bm{Z}}\approx C_p^{-1}\hat{\bm{S}}\left(\hat{\bm{R}}_a+R_{ct}\right)C_p\hat{\bm{R}}_a.\label{sine_mid_v}
\end{equation}
(\ref{sine_mid_v}) is more intuitive when $R_{ct}$ and $C_p$ are uniform, which gives $\hat{\bm{Z}}=\hat{\bm{R}}_a+R_{ct}$. Similar to high frequencies, at middle frequencies, the boundary is equipotential and the current is in phase with voltage. The impedance is the sum of the access resistance and the charge transfer resistance.

\subsection{\label{vpulse}Chronoamperometric Response}

To study the transient behavior, we focus on the eigenmodes with long characteristic times, and assume that all eigenmodes of $\hat{\bm{T}}_b$ decay infinitely fast. Explicitly, we assume $\hat{\bm{T}}_b\partial_t + \hat{I}\approx\hat{I}$, and (\ref{surface_T_dVdt}) becomes:
\begin{equation}
    \hat{\bm{T}}_a\Dot{\varphi}+\varphi=\hat{\bm{R}}_aC_p\Dot{V}\label{eq_Ta}
\end{equation}
Without loss of generality, we assume $U_p(\bm{r}, 0)=U(\bm{r}, 0)=0$, and $V(t) = 0$ when $t\leq0$. We take $V(t)$ of the form $V(t) = V_0+vt$ for $t>0$. Since $U_p(t)$ must be continuous, we have:
\begin{equation}
    \left(R_{ct}+\hat{\bm{R}}_a\right)i(\bm{r}, 0^+)=V(0^+).\label{i_0}
\end{equation}
At steady state, we have $\Dot{\varphi}=0$. Together, we have:
\begin{subequations}
\begin{align}
        \varphi(0^+, \bm{r}) & = \hat{\bm{R}}_ai(\bm{r}, 0)= \hat{\bm{R}}_a{\left(R_{ct}+\hat{\bm{R}}_a\right)}^{-1}V_0,\label{phi_0}\\
        \varphi(\infty, \bm{r}) &= \hat{\bm{R}}_aC_p(\bm{r})v.
\end{align}
\label{phi_0_inf}
\end{subequations}

Let $\Upsilon_{a,l}$ be the $l^{\mathtt{th}}$ normalized eigenfunction of $\hat{\bm{T}}_a$, corresponding to the eigenvalue $\tau_l$. We expand $\varphi$ in the basis $\{\Upsilon_{a}\}$, with the coefficients $\{\varphi_l\}$. If $C_p$ is uniform, $\{\Upsilon_{a}\}$ is orthonormal, and the expansion is straightforward:
 \begin{equation}
     \varphi_l(t) = \braket{\Upsilon_{a,l}(\bm{r})|\varphi(\bm{r}, t)}.
 \end{equation}
For the more general case when $C_p$ is not uniform, we perform the Gram-–Schmidt orthogonalization to $\{\Upsilon_{a}\}$ with a coefficient matrix $\bm{G}$:
\begin{equation}
    \begin{aligned}
    \begin{bmatrix}
    \Tilde{\Upsilon}_{a,1}\\
    \Tilde{\Upsilon}_{a,2}\\
    \Tilde{\Upsilon}_{a,3}\\
    \vdots
    \end{bmatrix}
    =
\bm{G}
\begin{bmatrix}
    \Upsilon_{a,1}\\
    \Upsilon_{a,2}\\
    \Upsilon_{a,3}\\
    \vdots
    \end{bmatrix}
     =
\begin{bmatrix}
1 \\
g_{21} & g_{22} \\
g_{31} & g_{32} & g_{33}\\
\cdots&\cdots&\cdots&\ddots
\end{bmatrix}
\begin{bmatrix}
    \Upsilon_{a,1}\\
    \Upsilon_{a,2}\\
    \Upsilon_{a,3}\\
    \vdots
    \end{bmatrix},
\end{aligned}
\end{equation}
such that $\{\Tilde{\Upsilon}_{a}\}$ is orthonormal. We have:
\begin{equation}
    \begin{aligned}
    \begin{bmatrix}
    \varphi_1\\
    \varphi_2\\
    \varphi_3\\
    \vdots
    \end{bmatrix}
    = \bm{G}^\intercal
\begin{bmatrix}
    \braket{\Tilde{\Upsilon}_{a,1}|\varphi}\\
    \braket{\Tilde{\Upsilon}_{a,2}|\varphi}\\
    \braket{\Tilde{\Upsilon}_{a,3}|\varphi}\\
    \vdots
    \end{bmatrix}.
\end{aligned}\label{G-S}
\end{equation}
Note that when $C_p$ is uniform, $\bm{G}=\bm{I}$, the identity matrix. By the principle of superposition, we have
\begin{equation}
\begin{split}
    &i(\bm{r}, t)=\hat{\bm{S}}\varphi(\bm{r},t)\\
    =&C_p(\bm{r})v+\hat{\bm{S}}\sum_l\left(\varphi_l(0)-\varphi_l(\infty)\right)\Upsilon_l(\bm{r})e^{-\dfrac{t}{\tau_l}}.
    \label{i_solution}
\end{split}
\end{equation}

The solution consists of a steady state component $C_p(\bm{r})v$, which is PCD, and a transient component consisting of eigenmodes $\{\varphi_l\Upsilon_l(\bm{r})\}$ that exponentially decay at the rates~$\tau_l$. We call such transient behavior the EP-PCD transition (or EP-UCD when the interface material is uniform), whose longest characteristic time corresponds to the largest eigenvalue of $\hat{\bm{T}}_a$:
\begin{equation}
    \tau_{\max} = \lambda_{\max}\left(\hat{\bm{T}}_a\right).
    \label{tau_max}
\end{equation}

\subsection{\label{ipulse}Chronopotentiometric Response}

With controlled total current, (\ref{eq_Ta}) still holds, but the solution is not the superposition of exponentially decaying eigenmodes, since $\Dot{V}$ is no longer constant in time. Without loss of generality, we assume $U_p(\bm{r}, 0)=U(\bm{r}, 0)=0$, and the total current $I_{tot}(t) = 0$ when $t\leq0$. We take $I_{tot}(t)=I_0$ for $t>0$.
Denote $u_l$ the net current flow coefficient of the~$l^{\mathtt{th}}$ eigenmode:
\begin{equation}
    u_l = \braket{\hat{\bm{S}}\Upsilon_{a,l}|1},
\end{equation}
so that $\varphi_lu_l$ is the net current of $\varphi_l\Upsilon_{a,l}$.
We expand $\hat{\bm{R}}_aC_p$ in basis $\{\Upsilon_a\}$ with coefficients $\{\upsilon_l\}$:
\begin{equation}
    \begin{aligned}
    \begin{bmatrix}
    \upsilon_1\\
    \upsilon_2\\
    \upsilon_3\\
    \vdots
    \end{bmatrix}
    = \bm{G}^\intercal
\begin{bmatrix}
    \braket{\Tilde{\Upsilon}_{a,1}|\hat{\bm{R}}_aC_p1}\\
    \braket{\Tilde{\Upsilon}_{a,2}|\hat{\bm{R}}_aC_p1}\\
    \braket{\Tilde{\Upsilon}_{a,3}|\hat{\bm{R}}_aC_p1}\\
    \vdots
    \end{bmatrix},
\end{aligned}
\end{equation}
so that $\Dot{V}\upsilon_l$ is the component of $\hat{\bm{R}}_aC_p\Dot{V}$ in $\Upsilon_{a,l}$.

For the initial and the steady state conditions, (\ref{i_0}) and (\ref{phi_0_inf}) still hold. We combine (\ref{phi_0}) and (\ref{G-S}), and $V(0^+)$ is given by $V_0$ in
\begin{equation}
    \begin{aligned}
    {\begin{bmatrix}
    u_1\\u_2\\u_3\\\vdots
    \end{bmatrix}}^\intercal\bm{G}^\intercal
\begin{bmatrix}
    \braket{\Tilde{\Upsilon}_{a,1}|\hat{\bm{R}}_a{\left(R_{ct}+\hat{\bm{R}}_a\right)}^{-1}1}\\
    \braket{\Tilde{\Upsilon}_{a,2}|\hat{\bm{R}}_a{\left(R_{ct}+\hat{\bm{R}}_a\right)}^{-1}1}\\
    \braket{\Tilde{\Upsilon}_{a,3}|\hat{\bm{R}}_a{\left(R_{ct}+\hat{\bm{R}}_a\right)}^{-1}1}\\
    \vdots
    \end{bmatrix}V_0=I_0.
\end{aligned}
\end{equation}
We can thereby determine $\{\varphi_l(0^+)\}$.

Total current does not change, so we have:
\begin{equation}
    \sum_{l}\Dot{\varphi}_lu_l=0.\label{const_i}
\end{equation}
In basis $\{\Upsilon_{a}\}$, (\ref{eq_Ta}) becomes:
\begin{equation}
    \tau_l\Dot{\varphi}_l\Upsilon_{a,l}+\varphi_l\Upsilon_{a,l}=\Dot{V}\upsilon_l\Upsilon_{a,l}, \quad\forall l.\label{eigen_eq}
\end{equation}
Let $\bm{y}$ be a vector of variables:
\begin{equation}
\begin{aligned}
    \bm{y}^\intercal=
    \begin{bmatrix}
        V&\varphi_1&\varphi_2&\varphi_3&\dots
    \end{bmatrix}
\end{aligned},
\end{equation}
and $\bm{\Gamma}$ be a matrix of coefficients:
\begin{equation}
\begin{aligned}
    \bm{\Gamma}=
    \begin{bmatrix}
        0&u_1&u_2&u_3&\cdots\\
        \upsilon_l&-\tau_1\\
        \upsilon_2&&-\tau_2\\
        \upsilon_3&&&-\tau_3\\
        \vdots&&&&\ddots
    \end{bmatrix}
\end{aligned}.
\end{equation}
We combine (\ref{const_i}) and (\ref{eigen_eq}), and have:
\begin{equation}
    \Gamma\Dot{\bm{y}}=
    \begin{bmatrix}
        0\\
        &\bm{I}
    \end{bmatrix}\bm{y}.\label{i_pulse_dynamics}
\end{equation}
$\Gamma$ is full-rank, and solving for the transient behavior with controlled current becomes a standard problem of homogeneous linear dynamic system, as in (\ref{i_pulse_dynamics}).

\section{\label{disk_example}Solution for a Disk Electrode}

We now consider a disk electrode placed at the center of an insulating plane, with electrolyte filling the half-space above the plane. For simplicity of mathematical forms, from here on, we assume uniform surface material, with constant $R_{ct}$, $C_p$ and $C_d$. The majority of the theoretical derivation has been done in \cite{nisancioguglu1973transient_v}, with the assumption of constant electrochemical reaction potential ($C_p\xrightarrow{}\infty$). As a demonstration of our more intuitive framework developed in Sections \ref{theory} and \ref{vpulse}, we apply it to this problem in Section \ref{disk_theory}, and then compare the results with experimental measurements in Section \ref{experiment}.

\subsection{\label{disk_theory}Theoretical Derivation}

 We consider a disk electrode of radius $a$. $R_P$, $C_p$ and $C_d$, as defined in Section \ref{theory} are uniform. In the Cartesian coordinates, we have $E=\{(x, y, z\geq0)\}$, $A=\{(x, y, 0):\sqrt{x^2+y^2}\leq a\}$ and $D=\{(x, y, 0):\sqrt{x^2+y^2}> a\}$. Following \cite{newman1966resistance}, we will use the elliptic coordinate system $(\xi, \eta)$.
Laplace's equation in elliptic coordinates is
\begin{equation}\label{laplace}
	\Delta\Phi(\xi, \eta)=\partial_\xi\left[(1+\xi^2)\partial_\xi\Phi\right]+\partial_\eta\left[(1-\eta^2)\partial_\eta\Phi\right]=0,
\end{equation}
with the boundary conditions
\begin{subequations}
	\begin{align}
		&\Phi(0, \eta)=\psi(\eta),\label{boundaryCondition1}\\
		&\Phi(\infty, \eta)=0,\label{boundaryCondition2}\\
		&\left.\frac{\partial}{\partial z}\Phi(\xi, 0)\right\vert_{\xi>0}=0.\label{boundaryCondition3}
	\end{align}
\end{subequations}
We note that
\begin{subequations}
	\begin{empheq}[left = \empheqlbrace]{align}
	\left.\partial_z\right\vert_{r\leq a, z=0} &= \frac{1}{a\eta}\partial_\xi, \label{chainrule1} \\
	\left.\partial_z\right\vert_{r>a, z=0} &= \frac{1}{a}\partial_\eta. \label{chainrule2}
	\end{empheq} 
\end{subequations}
By \cite{newman1966current}, the solution to (\ref{laplace}) is
\begin{equation}
	\Phi(\xi, \eta) = \sum_{l=0}^{\infty}k_lX_l(\xi)P_{2l}(\eta),\label{PhiSolv0}
\end{equation}
with $l\in\mathbb{N}$, ${k_l}$ are constant coefficients, $P_{2l}$ is the $2l^\mathtt{th}$ Legendre polynomial of the first kind, and $X_l(\xi)$ is the solution to
\begin{equation}
	{\left[(1+\xi^2)X'\right]}' - 2l(2l+1)X = 0. \label{eqX}
\end{equation}

We normalize $P_{2l}$, so that $\tilde{P}_{2l} :=\sqrt{4l+1}P_{2l}$ form an orthonormal an orthonormal basis of functions in $\{f\in C^\infty: [0,1]\rightarrow\mathbb{R}, f'(0)=0\}$.
$k_l$ in (\ref{PhiSolv0}) are chosen so $\Phi(0, \eta)$ matches a given $\varphi(\eta)$. With $X_l(0)=1$, (\ref{PhiSolv0}) becomes
\begin{equation}
\Phi(\xi, \eta) = \sum_{l=0}^{\infty}\ket{\tilde{P}_{2l}}X_l(\xi)\braket{\tilde{P}_{2l}|\varphi}.
\label{PhiSolv}
\end{equation}
By the definition of $\hat{\bm{S}}$ in (\ref{def_S}),
\begin{equation}
	\hat{\bm{S}}\varphi(\eta)=i(\eta)=\left.-\dfrac{1}{\rho}\partial_z\Phi(0, \eta)\right\vert_{r\leq a}.
\end{equation}
By (\ref{chainrule1}), we then have
\begin{equation}
    \hat{\bm{S}}= -\frac{1}{\rho a\eta}\sum_{l=0}^{\infty}\ket{\tilde{P}_{2l}}X'_l(0)\bra{\tilde{P}_{2l}}.
\end{equation}
Equation [18] of \cite{newman1966current} gave, without derivation, that
\begin{equation}
		X'_l(0)=-\frac{2}{\pi}\left[\frac{(2l)!!}{(2l-1)!!}\right]^2.\label{dPhidxi}
\end{equation}
A detailed derivation was provided in Section \num{6.9} of \cite{wang2016investigation}. Appendix~\ref{dXdXi} provides the derivation in a more rigorous manner, and proves the monotonicity of $X(\xi)$. Intuitively, this shows that potential distribution is monotonic along each hyperbolic line in the elliptic coordinates.

Therefore, the operator $\hat{\bm{S}}$ has the form
\begin{equation}
	\hat{\bm{S}} = \frac{2}{\pi\rho a}\eta^{-1}\sum_{l=0}^{\infty}\ket{\tilde{P}_{2l}}\left[\frac{(2l)!!}{(2l-1)!!}\right]^2\bra{\tilde{P}_{2l}}.\label{Sfinite}
\end{equation}
Since $R_{ct}$ and $C_p$ are uniform, we now use these two notations as scalars. $\hat{\bm{T}}_a$ becomes
\begin{equation}
    \hat{\bm{T}}_a = C_p\left(\hat{\bm{S}}^{-1}+R_{ct}\hat{I}\right),
\end{equation}
which shares the same eigenspace with $\hat{\bm{S}}$. We define a dimensionless operator
\begin{subequations}
\begin{align}
    \hat{\bm{A}} &:= \eta^{-1}\sum_{l=0}^{\infty}\ket{\tilde{P}_{2l}}\left[\frac{(2l)!!}{(2l-1)!!}\right]^2\bra{\tilde{P}_{2l}},
\end{align}
\end{subequations}
so that $\hat{\bm{S}}=\dfrac{2}{\pi\rho a}\hat{\bm{A}}$. We have:
\begin{equation}
    \tau_a=\dfrac{\pi\rho aC_p}{2\lambda\left(\hat{\bm{A}}\right)}+R_{ct}C_p.
\end{equation}
As $|\eta|\le 1$, the smallest eigenvalue of $\hat{\bm{A}}$ can then be estimated as $\lambda_{\min}\left(\hat{\bm{A}}\right)\geq 1$, and
\begin{equation}
	\max\{\tau_a\}\leq \left(\dfrac{\pi\rho a}{2}+R_{ct}\right)C_p.\label{bound}
\end{equation}
The numerically computed first 4 eigenfunctions of $\hat{\bm{A}}$, together with their respective eigenvalues, are shown in FIG. \ref{fig_eigen}. It turns out that the smallest eigenvalue of $\hat{\bm{A}}$ is about \num{1.8}, which gives
\begin{equation}
	\max\{\tau_a\}=\left(0.864\rho a+R_{ct}\right)C_p.
\end{equation}

\begin{figure}[t]
	\centering
	\includegraphics[width=1\columnwidth]{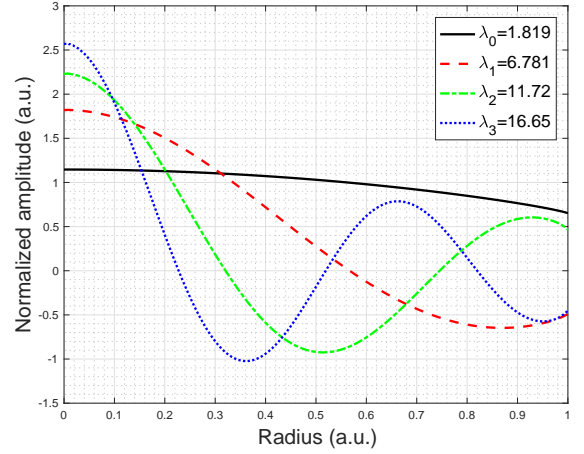}
	\caption{The first four eigenfunctions and their respective eigenvalues of the operator $\hat{\bm{A}}$, in the cylindrical coordinate $r$. Each eigenfunction, which is also an eigenmode of $\hat{\bm{T}}_a$, evolves with a time constant determined by the corresponding eigenvalue of $\hat{\bm{T}}_a$. The smallest eigenvalue of $\hat{\bm{A}}$ is $1.819$, corresponding to the largest time constant $\left(0.864\rho a+R_{ct}\right)C_p$, dominates the overall transition.}
	\label{fig_eigen}
\end{figure}

Note the similarity between $\Phi$ and the electrostatic potential of a charged disk. Specifically, if a flat disk in free space has charge density $\sigma(r) = 2\varepsilon_0\rho i(r)$, then $\Phi$ is also the potential distribution around the charged disk. With UCD, the potential at the center of the disk electrode is
\begin{equation}
    \varphi(r=0)=\int_0^a \frac{2\pi r\times2\varepsilon_0\rho i}{4\pi\varepsilon_0r}dr=\rho ai.
\end{equation}
Although access resistance is not well defined with UCD since the electrode surface is not equipotential, we can define an effective access resistance as the potential at the center of the electrode divided by the total current:
\begin{equation}
    R_{a,\texttt{eff}} = \frac{\rho ai}{\pi a^2i} = \frac{\rho}{\pi a}.
    \label{Ra_UCD}
\end{equation}
This value is higher than the widely accepted access resistance value $\rho/(4a)$ of an equipotential disk electrode\cite{newman1966resistance}. There are different definitions of the effective access resistance. For example, \cite{oldham2004rc} defines $R_{a,\texttt{eff}}$ as the average of $\varphi$ divided by the total current. We have chosen our definition because the center of the electrode surface is the most typical point to sample when measuring potential in the electrolyte.

To illustrate the dynamics of the current redistribution in chronoamperometry, let $\rho$ and $vC_p$, defined in Section~\ref{vpulse}, as well as the radius $a$, be unitary, and let $R_{ct}=0$. To keep the total current the same at the initial state and at the steady state, we will choose $V_0=\pi\rho\eta Ca/4$. The resulting evolution of the current density and the potential distribution on a disk electrode over time are shown in FIG. \ref{iv_evolve}, and also in the Supplemental Video.

\begin{figure}[t]
    \centering
	\includegraphics[width=1\columnwidth]{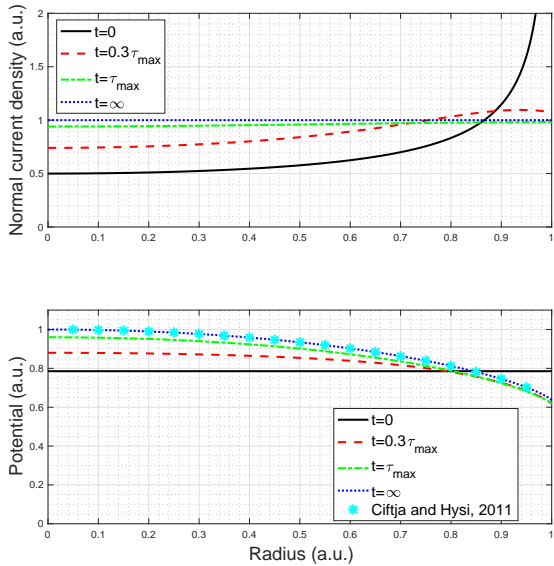}
    \caption{Top panel: the current density on a disk electrode at four different time points. Here, $\tau_{\max}$ is the time constant of the slowest decaying eigenmode,~$0.864\rho aC_p$. Bottom panel: the electric potential on a disk electrode at the same time points as in the top panel. The cyan asterisks are the analytical solution to the electrostatic potential of a uniformly charged disk given by \cite{ciftja2011electrostatic}, which is mathematically equivalent to the potential distribution under the UCD boundary condition.}\label{iv_evolve}
\end{figure}

\subsection{\label{experiment}Experimental Validation}

To experimentally verify whether the dynamics of the total current on a capacitive electrode-electrolyte interface matches the solution described by (\ref{i_solution}), we performed chronoamperometric measurements. From here on, all potentials are referred to the Ag/AgCl electrode, unless noted otherwise. In order to sustain higher current within the relatively low voltage window, we used an electrode coated with a sputtered iridium oxide film (SIROF) -- a material known for its large charge injection capacity (CIC)\cite{cogan2009sputtered}. The continuous iridium valency of SIROF between \num{0} to $\SI[per-mode=symbol]{0.8}{\volt}$\cite{pauporte1999x}, together with its porous surface\cite{cogan2008neural}, enables a large capacitance.

We used an electrochemical cell of the 3-electrode configuration. The working electrode is a $\SI[per-mode=symbol]{80}{\micro\metre}$-diameter platinum disk coated with $\SI[per-mode=symbol]{400}{\nano\metre}$ of SIROF. The electrode was treated with \num{4}\% NaClO solution and plasma cleaning, following the protocol of \cite{Jens2019} (Sections \num{2.15} and \num{3.3}). A large ($>\SI[per-mode=symbol]{1}{\centi\metre\squared}$) platinum grid was used as the counter electrode. The reference was an Ag/AgCl electrode in $3\si{M}$ KCl solution. The electrolyte is \num{6}-time diluted phosphate buffered saline (PBS) solution, whose resistivity is $\SI[per-mode=symbol]{353}{\ohm\cdot\centi\meter}$, measured with an electrical conductivity meter.

First, we validated that the electrode kinetics is invariant within the potential range, and that concentration polarization is not the dominating factors in the electrode impedance. As shown in FIG. \ref{validate}, the black solid line in the top panel represents a step voltage pulse. The corresponding total current is shown by the black solid line of the bottom panel (Trial 1). To show that the electrode kinetics is not varying with potential, we offset the voltage pulse up and down by $\SI[per-mode=symbol]{100}{\milli\volt}$ (Trials~2 and 3), and observed that the current did not change. To check whether the concentration polarization affects the current amplitude, we scaled the voltage pulse by a factor of 2. The current nearly doubled as well, indicating that concentration polarization is negligible since the concentration overpotential does not scale linearly with the current density.

\begin{figure}[t]
	\centering
	\includegraphics[width=1\columnwidth]{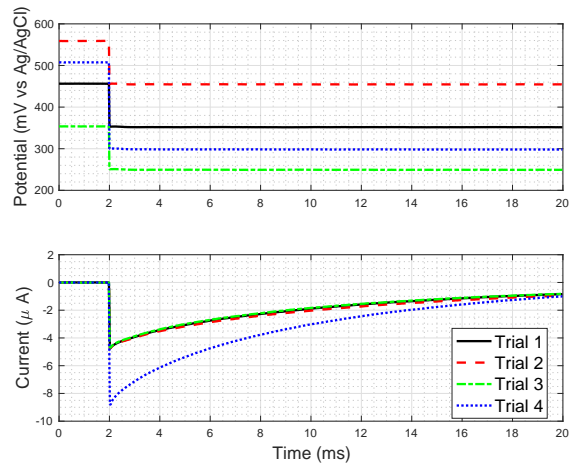}
	\caption{Validation that electrode kinetics is invariant within the potential range, and that concentration polarization is negligible at the selected settings. A curve in the top panel and the curve of the same style in the bottom panel represent the voltage and current measured, respectively. Trial 1 is the baseline. Trial 2 and 3 show that the electrode kinetics is not varying with potential, while Trial 4 shows linear scaling with the voltage step amplitude, indicating that concentration polarization has no effect on the circuit.}
	\label{validate}
\end{figure}

We observed very small charge transfer resistance on this electrode, which allows to neglect the effect of $R_{ct}$ and combine $C_p$ and $C_d$ into one supercapacitance $C_s$. To confirm, we performed electrochemical impedance spectroscopy (EIS), and used the Levenberg-Marquardt method to fit the Bode plot to the circuit diagram in FIG.~\ref{circuit_xsimp}. The measurement and the fitting curves are plotted in FIG. \ref{EIS}. From the fitting, $R_{ct}+R_a=\SI[per-mode=symbol]{22.3}{\kilo\ohm}$, within $\num{2}\%$ error range of the EP access resistance predicted by $\rho/(4a)=\SI{22.0}{\kilo\ohm}$. Per the discussion in Section~\ref{sine}, we confirm that $R_{ct}$ is negligible. We also found $C_s=\SI[per-mode=symbol]{8.52}{\milli\farad\per\centi\metre\squared}$ from the fitting.

\begin{figure}[t]
	\centering
	\begin{subfigure}{.45\textwidth}
        \includegraphics[width=\textwidth]{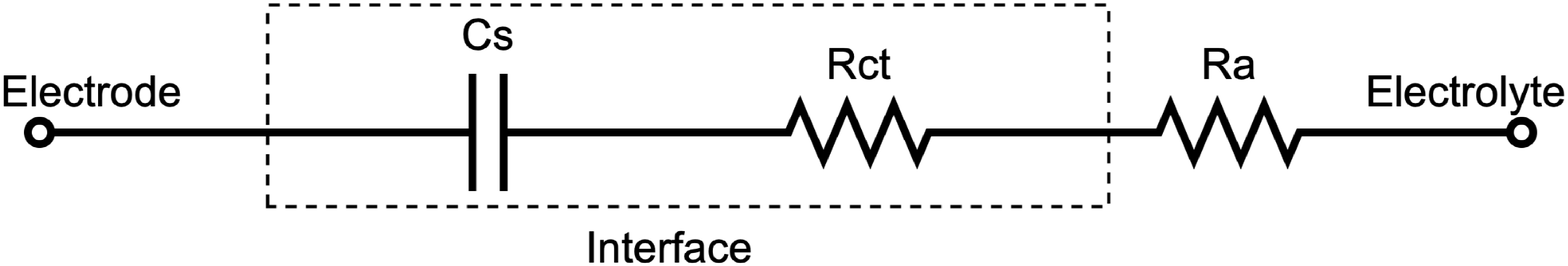}
        \caption{\label{circuit_xsimp}}
    \end{subfigure}
    \begin{subfigure}{.45\textwidth}
        \includegraphics[width=\textwidth]{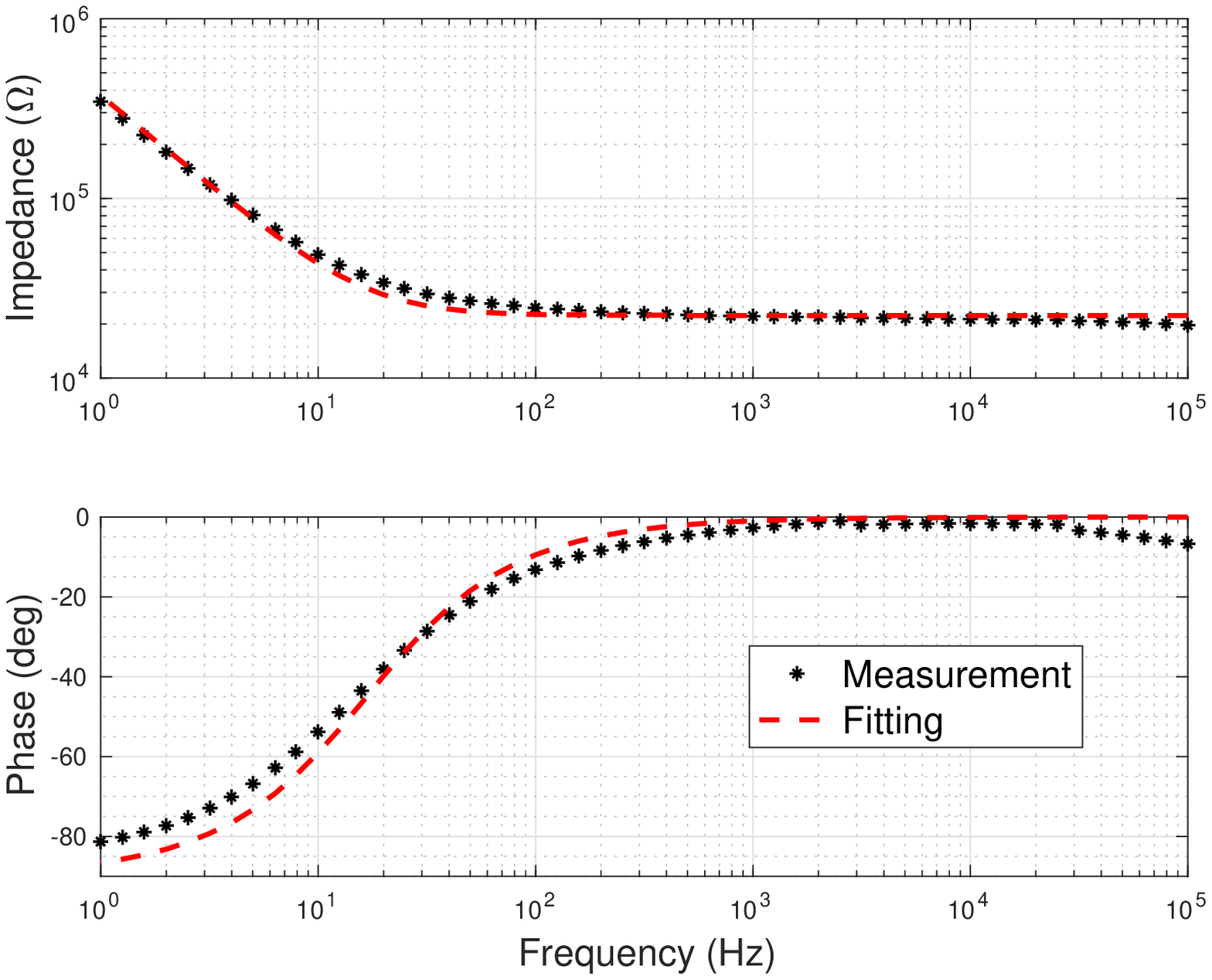}
        \caption{\label{EIS}}
    \end{subfigure}
	\caption{(a) Circuit diagram of the fitting model. (b) EIS Bode plot of the $\SI[per-mode=symbol]{80}{\micro\metre}$-diameter SIROF disk electrode.}
\end{figure}

A voltage waveform including a step and a ramp, with $V_0=\SI[per-mode=symbol]{-100}{\milli\volt}$ and $v = \SI[per-mode=symbol]{-3.24}{\volt\per\second}$, as defined in Section~\ref{vpulse}, was applied to the SIROF electrode (top panel in FIG. \ref{iv_evolve}). Note that this waveform is different from the one used in FIG. \ref{iv_evolve}, and hence the total current at the initial and the steady states is not the same either. The resulting current waveform is shown in the same panel.

One competing theory is that the surface instantly exchanges charge laterally, so the charge accumulation in the capacitor is uniform and the interface is always EP. This is the assumption behind the RC fitting in \cite{myland2005does, behrend2008dynamic, boinagrov2015photovoltaic}. To compare the experimental results with predictions of the constant EP theory and of the EP-UCD transition, time derivative of the total current was calculated and plotted for both models and the measurement. The EP-UCD perdiction is from (\ref{i_solution}), while the constant EP perdiction is a simple RC process. As can be seen in the bottom panel of FIG. \ref{didt}, the measurement matches our theory rather than the constant EP assumption.

\begin{figure}[ht]
	\centering
	\includegraphics[width=1\columnwidth]{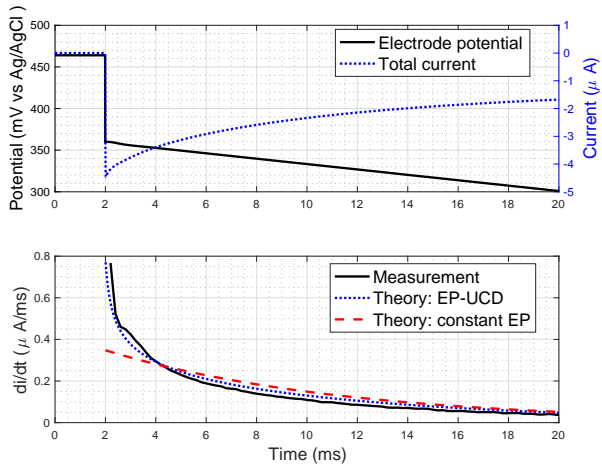}
	\caption{Top panel: voltage waveform applied to the SIROF electrode and the measured current. \\Bottom panel: time derivative of the current from the measurement, and the theoretical predictions calculated with two different theories: EP-UCD transition and constant~EP.}
	\label{didt}
\end{figure}

\section{Discussion}

A highly conductive electrode is always equipotential in its bulk, but this equipotentiality is often confused with the surface layer of electrolyte at the interface, which is the boundary typically modeled as the Helmholtz plane. Electrode kinetics and concentration polarization are the two mechanisms previously considered to cause uniform secondary current distribution. Another important mechanism is the charge accumulation on the interface, which is often under-appreciated in applications since it is not reflected in the initial current distribution. Previously, this effect has only been modeled for disk electrodes of uniform surfaces, assuming constant reaction potentials (\cite{nisancioguglu1973transient_v, nisancioguglu1973transient}) or no electrochemical reactions (\cite{myland2005does, behrend2008dynamic}).

As we show in Section \ref{theory}, for any geometry and any combination of surface materials, the current density eventually reaches the PCD steady state (or UCD if the capacitance per unit area is the same over the whole electrode). Redistribution of the current from the initial non-uniform spread at an equipotential state is driven by the uneven charge accumulation at the capacitive interface until it reaches the PCD, when the potential of all parts of the interface rises at the same rate. This transition is described by the superposition of exponentially decaying eigenmodes, each of which has a different time constant. Each eigenmode is a surface potential distribution that elicits the circuit response to change itself proportionally. The shorter the time constant is, the faster the eigenmode decays. For a disk electrode of radius $a$ and with uniform surface capacitance $C$, the dominant (longest) time constant is $0.864\rho Ca$, only $10\%$ larger than the simple RC time constant $\tau_{EP}=\pi\rho Ca/4=0.785\rho Ca$, where the EP access resistance $R_{EP}=\rho/(4a)$ is assumed. In an earlier finite element modeling\cite{myland2005does}, the total current was fit to one simple RC process, resulting in $8.7\%$ increment of the time constant compared to $\tau_{EP}$, which roughly matches our result. Strictly speaking, there are two sets of eigenmodes and time constants, dominated by the pseudocapacitance $C_p$ and the double-layer capacitance $C_d$, respectively. However, since the latter is faster than the former, when studying the transient behavior, we consider the latter to be instant with negligible effect on the circuit behavior. This requires $C_p\gg C_d$ everywhere, which may not be true if a surface consists of both electrochemically active parts and inert parts, but we can avoid this subtlety by choosing, nominally, $C_d=0$, $R_{ct}=0$ and $C_p$ the double-layer capacitance.
Empirically, the inverse time constants of different eigenmodes are separated almost evenly, which is equivalent to an asymptotic approximation conjectured by Troesch and Troesch\cite{troesch1972remark} in a solution of a problem in fluid dynamics, and confirmed computationally up to the $200^{\mathtt{th}}$ time constant\cite{wang2016investigation}.

It is important to note that the measurements of the access resistance using electrical impedance spectroscopy (EIS) correspond to the high end of the frequency range. At frequencies exceeding the inverse time constants of the current redistribution,  the interface remains practically equipotential. Therefore, the access resistance measured in EIS is associated with the EP boundary condition. In the middle of the frequency range, we should see the sum of the EP acess resistnace and the charge transfer resistance, if $C_p$ and $C_d$ are separated sufficiently apart.

Under the constant EP boundary condition, the current in response to a voltage step with a ramp is a simple exponential decay to the steady state, with a time constant of RC. Distinguishing this curve from a plot corresponding to our theory is not easy since the dominant time constant is only slightly longer than in the constant EP theory, and the magnitude of the slowest eigenmode is the largest. Therefore, the bottom panel in Figure $\ref{didt}$ compares the time derivative of the total current. Since the constant EP theory has only one decaying mode, while the EP-UCD transition has infinitely many and much faster decaying eigenmodes, total current decreases faster at the beginning of the pulse, as can be seen in the plot.

At steady state, the PCD boundary condition enables control of the  current distribution on various parts of the interface by selecting electrode materials of different capacitance per unit area. For example, if a part of the Au electrode is coated with SIROF, and the pulse duration exceeds the characteristic EP-PCD transition time, the current will flow primarily through the SIROF area, while the Au surface will be practically passive since its capacitance is about \num{1000} times smaller than that of SIROF. This effect was discovered in \cite{flores2016optimization} but only analyzed using a discrete circuit approximation. The phenomenon of PCD greatly simplifies the 3-D electrode fabrication by electroplating: the side walls of the Au-electroplated electrode do not have to be coated with an insulator. They can remain exposed to the liquid since the SIROF on top of these walls will collect vast majority of the current\cite{flores2018optimization, flores2018vertical, ho2018grating}.  Similarly, leads to a high-capacitance electrode do not have to be well-insulated from the medium as long as their capacitance is much smaller than that of the target electrode. For example, the electrodes used in \cite{musk2019integrated}.

Understanding the distribution of electric field in the medium is particularly important for proper design of the electro-neural interfaces. For example, if the pulse duration is significantly shorter than the EP-PCD transition time, the electric current will flow primarily from the electrode edges. This will result in highly enhanced electric field in these areas, which may stimulate and even damage the nearby cells much more than the average current density calculated by dividing the total current by the total electrode area\cite{wang2014reduction}. The edge effect can be effectively avoided if the electrode capacitance is selected such that the characteristic transition time is below the intended pulse duration. In addition, the electrode capacitance can be gradually reduced toward the edges, for example, by decreasing the SIROF thickness using partial shadowing techniques. 

\section{Conclusions}
We provided an analytical solution describing the dynamics of the current redistribution on capacitive electrode-electrolyte interfaces and validated our theory experimentally. We demonstrated that current and voltage redistribute over time from the initial non-uniform spread to the steady state, where the current density at the surface is proportional to the capacitance per unit area. This transition can be described as a superposition of the exponentially decaying eigenmodes. The slowest and dominant eigenmode of a disk electrode has a time constant similar to RC of the electrode. We also note that since the EIS based measurements of the access resistance are performed at high frequencies, they correspond to equipotential boundary condition, which is different from the access resistance at low frequencies. To avoid the strong edge effects on large electrodes, the capacitance of the electrode material should be selected so that the EP-PCD transition time does not significantly exceed the intended pulse duration. 

\begin{acknowledgments}
Authors would like to thank Prof. Christopher Chidsey from Stanford University and Dr. Boshuo Wang from Duke University for the very helpful discussions. Authors would also like to thank Yibin Shu from the Institute of Education at Tsinghua University in China for his help with preparation of the key image.

Funding was provided by the National Institutes of Health (Grants R01-EY-018608, R01-EY-027786), Stanford Neurosciences Institute, and Research to Prevent Blindness.
\end{acknowledgments}

\appendix

\section{} \label{pos_def}

Here we show that the operator $\hat{\bm{S}}$ defined in (\ref{def_S}) is positive-definite.
\begin{proof}
Let $\Phi$ and $\Psi$ be two non-zero potential distributions in $E$ and define
 \begin{subequations}
\begin{align}
        \phi_0(\bm{r}):=\Phi(\bm{r}),\quad \bm{r}\in A,\\
        \phi_1(\bm{r}):=\Phi(\bm{r}),\quad \bm{r}\in D,
\end{align}
\end{subequations}
with similar definitions for $\psi_0$ and $\psi_1$. By (\ref{normalI_D}) and (\ref{normalI_A})
 \begin{subequations}
\begin{align}
        \nabla\Phi(\bm{r})\cdot\bm{n}(\bm{r})&=0,& \bm{r}\in D,\label{dPhi_D}\\
        \nabla\Phi(\bm{r})\cdot\bm{n}(\bm{r})&=-\rho\hat{\bm{S}}\phi_0,& \bm{r}\in A.
\end{align}
\end{subequations}
We may now write
\begin{equation}
\begin{split}
    &\braket{\psi_0, \hat{\bm{S}}\phi_0}
    =\int_A\psi_0\left(\hat{\bm{S}}\phi_0\right)dS\\
    =&-\dfrac{1}{\rho}\int_A\Psi\nabla\Phi\cdot\bm{n}dS=-\dfrac{1}{\rho}\int_{A\bigcup D}\Psi\nabla\Phi\cdot\bm{n}dS.
\end{split}
\end{equation}
The last equality above used (\ref{dPhi_D}). By the divergence theorem and (\ref{laplacePhi}), we now have
\begin{equation}
\begin{split}
    &-\int_{A\bigcup D}\Psi\nabla\Phi\cdot\bm{n}dS
    =\int_E\nabla\left(\Psi\nabla\Phi\right)dV\\
    =&\int_E\nabla\Psi\cdot\nabla\Phi dV+\int_E\Psi\Delta\Phi dV\\
    =&\int_E\nabla\Psi\cdot\nabla\Phi dV,
\end{split}
\end{equation}
thus
\begin{equation}
    \braket{\psi_0, \hat{\bm{S}}\phi_0} = \dfrac{1}{\rho} \int_E\nabla\Psi\cdot\nabla\Phi dV.
\end{equation}
It follows that $\hat{\bm{S}}$ is Hermitian, which is the result of the Lorentz reciprocity. Furthermore, as
\begin{equation}
    \int_E{\|\nabla\Phi\|}^2 dV>0.
\end{equation}
we have
\begin{equation}
    \braket{\phi_0, \hat{\bm{S}}\phi_0}> 0,\quad\forall \phi_0\neq 0,
\end{equation}
and it follows that $\hat{\bm{S}}$ is positive-definite.

\end{proof}

\section{}\label{dXdXi}

The general solution to (\ref{eqX}) is
\begin{equation}
	X_{l}(\xi) = c_1P_{2l}(j\xi)+c_2Q_{2l}(j\xi),\label{Xsolv1}
\end{equation}
where $j=\sqrt{-1}$, $c_1, c_2 \in\mathbb{C}$ are coefficients and $Q_{2l}$ is the~$2l^\mathtt{th}$ Legendre polynomial of the second kind. The boundary conditions are
\begin{subequations}
	\begin{align}
		X(0)&=1,\label{boundaryConditionX0}\\
		X(+\infty)&=0.\label{boundaryConditionXinf}
	\end{align}\label{eqX_refine_bnd}
\end{subequations}
As $Q_{2l}(0)=0$, by (\ref{boundaryConditionX0}) we have $c_1 = \frac{1}{P_{2l}(0)}$. By equation (12.216) in \cite{arfken2005mathematical}, we also have
\begin{equation}
	Q_{2l}(z) = \frac{P_{2l}(z)}{2}\ln\frac{1+z}{1-z} + R_{2l-1}(z), \label{PQrelationship}
\end{equation}
where $R_{2l-1}$ is a polynomial of degree $(2l-1)$ with only odd-order terms, thus
\begin{subequations}
	\begin{align}
		X_{l}(\xi)&=\frac{P_{2l}(j\xi)}{P_{2l}(0)}+c_2\left(jP_{2l}(j\xi)\arctan\xi+ R_{2l-1}(j\xi)\right)\\
		&=\left(\frac{1}{P_{2l}(0)}+jc_2\arctan\xi\right)P_{2l}(j\xi)+c_2R_{2l-1}(j\xi).\label{Xsolve_alter}
	\end{align}
\end{subequations}
Using (\ref{boundaryConditionXinf}), we conclude that
\begin{equation}
	\lim\limits_{\xi\rightarrow+\infty} \left(\frac{1}{P_{2l}(0)}+jc_2\arctan\xi\right) = 0. \label{limc2}
\end{equation}
and
\begin{equation}
	c_2 = \frac{2j}{\pi P_{2l}(0)}.
\end{equation}
By Theorem \ref{solv_property} below, a solution satisfying (\ref{eqX_refine_bnd}) exists and it must be of the form
\begin{equation}
	X_{l}(\xi) = \frac{1}{P_{2l}(0)}\left(P_{2l}(j\xi)+\frac{2j}{\pi}Q_{2l}(j\xi)\right).\label{Xsolv2}
\end{equation}
Now because $P'_{2l}(0)=0$, it follows that
\begin{subequations}
	\begin{align}
		X'_l(0)&=\frac{2j}{\pi P_{2l}(0)}\left.\frac{d}{d\xi}Q_{2l}(j\xi)\right\vert_{\xi=0}\\
		&=-\frac{2}{\pi P_{2l}(0)}Q'_{2l}(0)\\
		&=-\frac{2}{\pi}\frac{1}{\frac{{(-1)}^l(2l-1)!!}{(2l)!!}}\frac{{(-1)}^l(2l)!!}{(2l-1)!!}\\
		&=-\frac{2}{\pi}\left[\frac{(2l)!!}{(2l-1)!!}\right]^2.
	\end{align}\label{dPsidxi}
\end{subequations}

\begin{theorem}\label{solv_property}
	A monotonically decreasing solution to (\ref{eqX}) satisfying the boundary conditions (\ref{eqX_refine_bnd}) exists.
\end{theorem}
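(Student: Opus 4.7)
The plan is to use a shooting argument for the second-order ODE (\ref{eqX}) on $[0,\infty)$. Rewriting it as a first-order system in $X$ and $Y := (1+\xi^2)X'$ gives
\begin{equation*}
X' = \frac{Y}{1+\xi^2}, \qquad Y' = qX, \qquad q := 2l(2l+1),
\end{equation*}
whose right-hand side has a definite sign (provided $l \ge 1$); the degenerate case $l=0$ is handled separately by the explicit formula $X_0(\xi) = 1 - (2/\pi)\arctan\xi$.

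For $l \ge 1$, I would parameterize the solutions with $X(0) = 1$ by the shooting parameter $s := X'(0) \in \mathbb{R}$ and introduce the two sets
\begin{align*}
A &:= \{s : X_s(\xi_0) < 0 \text{ for some } \xi_0 > 0\}, \\
D &:= \{s : X_s(\xi_0) > 0 \text{ and } X'_s(\xi_0) > 0 \text{ for some } \xi_0 \ge 0\}.
\end{align*}
Both are open by continuous dependence on initial data. The key one-line phase-plane observation is that $s \in D$ already forces blow-up: once $X > 0$ and $Y \ge 0$ at some $\xi_0$, the identity $Y' = qX$ keeps $Y$ nondecreasing, so $X' = Y/(1+\xi^2) > 0$ for $\xi > \xi_0$, $X$ stays nondecreasing and bounded away from $0$, and a direct integration yields $X_s(\xi) \to +\infty$. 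In particular $A \cap D = \emptyset$.

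I would then verify that $A$ and $D$ are each nonempty. For $D$, take $s = 0$: then $X''(0) = q > 0$ forces $X'$ to become strictly positive just past $\xi = 0$. For $A$, take $|s|$ large and negative, rescale $\tau := |s|\xi$, and observe that the equation for $\tilde X(\tau) := X_s(\tau/|s|)$ reduces to $\tilde X'' = O(1/s^2)$ to leading order, so that $\tilde X(\tau) \to 1 - \tau$ uniformly on bounded $\tau$-intervals as $|s| \to \infty$ by a standard Gronwall estimate; hence $X_s$ crosses $0$ near $\xi = 1/|s|$ for $|s|$ large. I expect this rescaling/perturbation step to be the main technical obstacle, since one must control $\tilde X$ and $\tilde X'$ simultaneously on a fixed $\tau$-interval before the bootstrap closes.

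With $A$ and $D$ open, disjoint, and nonempty, connectedness of $\mathbb{R}$ gives some $s^* \notin A \cup D$. For this $s^*$, $X_{s^*} \ge 0$ throughout; a zero $X_{s^*}(\xi_0) = 0$ at $\xi_0 > 0$ would be a local minimum, hence $X'_{s^*}(\xi_0) = 0$ as well, forcing $X_{s^*} \equiv 0$ by uniqueness and contradicting $X_{s^*}(0) = 1$. So $X_{s^*} > 0$ everywhere, and $X'_{s^*} < 0$ everywhere as well, since any zero of $X'_{s^*}$ would satisfy $X''_{s^*} = qX_{s^*}/(1+\xi^2) > 0$ and would promote $s^*$ into $D$. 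This already gives strict monotonicity. Finally, the monotone limit $L := \lim_{\xi\to\infty} X_{s^*}(\xi) \ge 0$ must vanish, because $L > 0$ would give $Y' \ge qL > 0$, forcing $Y \to +\infty$ and contradicting $Y < 0$ everywhere. This delivers both existence of a solution satisfying (\ref{eqX_refine_bnd}) and its monotonicity.
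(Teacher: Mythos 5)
Your argument is correct, but it is a genuinely different route from the paper's. The paper does not shoot on the initial slope: it constructs the solution as a monotone limit of two-point boundary-value problems on finite intervals, taking $X_n$ with $X_n(0)=1$, $X_n(n)=0$ (whose existence is free, via the explicit Legendre representation $c_1P_{2l}(j\xi)+c_2Q_{2l}(j\xi)$), proving each $X_n$ is monotonically decreasing by the same sign observation you use (at a critical point $X''$ has the sign of $X$, so no positive interior minimum can be followed by a positive maximum), then showing $X_n$ is increasing in $n$ by applying that maximum principle to $Z=X_m-X_n$, and finally killing a putative positive limit at infinity by integrating the equation, exactly as in your last step. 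Your shooting version replaces the finite-interval approximation by the standard open-sets-and-connectedness dichotomy on $s=X'(0)$; its advantage is that it is self-contained ODE analysis producing an honest solution on $[0,\infty)$ with no appeal to the Legendre formula and no need to justify that a pointwise limit of solutions solves the equation, at the price of the rescaling/Gronwall work needed to show $A\neq\emptyset$ (which you correctly flag as the technical core; a cruder route is to integrate $Y'=qX\le q$ while $Y<0$ to get $X(1)\le 1+\tfrac{\pi}{4}s+\tfrac{q}{2}\ln 2<0$ for $s$ sufficiently negative). Two small points to tighten: the inference ``blow-up, hence $A\cap D=\emptyset$'' needs the extra remark that a sign change of $X$ before $\xi_0$ would force a negative interior minimum, which the equation forbids; and you should note explicitly that $s^*\ge 0$ is excluded (since $s^*>0$ puts $s^*$ in $D$ immediately and $s^*=0$ does so via $X''(0)>0$), so that $X'_{s^*}<0$ holds from $\xi=0$ onward.
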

\begin{proof}
	When $l=0$, $X_l(\xi) = 1-\frac{2}{\pi}\arctan\xi$ is a valid solution. so we assume $l\geq1$.
	
	Let $X_n$ be a sequence of solutions to (\ref{eqX}), defined on $0\le\xi\le n$, with the boundary conditions
	\begin{subequations}
		\begin{align}
			&X_n(0) = 1,\label{seqbnd1}\\
			&X_n(n) =0,\label{seqbnd2}
		\end{align}\label{seqbnd}
	\end{subequations}
	Choosing $c_1 = {1}/{P_{2l}(0)}$ and $c_2=-{P_{2l}(in)}/{Q_{2l}(in)}$ gives an explicit form for $X_n$.
	
	We claim that $X_n$ monotonically decreases in $(0, n)$, and prove this by contradiction.
	
	Indeed, (\ref{eqX}) gives
	\begin{equation}
		(1+\xi^2)X''(\xi)+2\xi X'(\xi)=2l(l+1)X(\xi).\label{eqX_refine2}
	\end{equation}
	If $X_n$ is not monotonic, there exists a local extremum $\xi_0\in(0,n)$ such that $X_n'(\xi_0)=0$. We see from (\ref{eqX_refine2}) that~$X_n''(\xi_0)$ has the same sign as $X_n(\xi)$. Thus,
	if $X_n(\xi_0)>0$, then $\xi_0$ is a local minimum. Therefore, there exists a local maximum $\xi_1\in(\xi_0, n)$ such that $X_n(\xi_1)>0$. However, at $\xi_1$, which is also an extremum,~$X_n''(\xi_1)$ and~$X_n(\xi)$ have different signs, which is a contradiction.
	Similarly, assuming $X_n(\xi_0)<0$ leads to a contradiction as well. Therefore, $X_n$ monotonically decreases in $(0, n)$.
	
	We also claim that $X_n(\xi)$ is increasing in $n$, that is, if $m>n$ then $X_m(\xi)\ge X_n(\xi)$ for~$0\le\xi\le n$. Indeed,  monotonicity of $X_m(\xi)$ in $\xi$ implies that $X_m(n)>0$, thus $Z=X_m(\xi)-X_n(\xi)$ satisfies~(\ref{eqX}) with $Z(0)=0$ and $Z(n)>0$. By the same argument, $Z$ can not attain a negative minimum, thus~$Z(\xi)>0$ for all $0<\xi<n$. Therefore, the limit 
	\begin{equation}
		X_l=\lim\limits_{n\rightarrow+\infty}X_n,
	\end{equation}
	exists, and $X_l\geq0$ is monotonically decreasing. We claim that \begin{equation}\label{limitxi}
	\lim\limits_{\xi\rightarrow+\infty}X_{l}(\xi)=0,
	\end{equation} and prove this by contradiction.
	
	Assume there exists $\epsilon>0$ such that $X_l(\xi)\geq\epsilon$ for all $\xi \ge 0$. The integral of the left side of (\ref{eqX}) yields
	\begin{equation}
			\int_{0}^{t}{\left[(1+\xi^2)X_l'\right]}'d\xi=(1+t^2)X'_l(t)-X'_l(0)\leq -X'_l(0).\label{L1}
	\end{equation} 
	The integral of the right side of (\ref{eqX}) yields
	\begin{equation}
			\int_{0}^{t}2l(l+1)X_l(\xi)d\xi\geq2l(l+1)\epsilon t.\label{L2}
	\end{equation}
	As $t\rightarrow+\infty$, we have $2l(l+1)\epsilon t>-X'_l(0)$, which is a contradiction, and (\ref{limitxi}) follows.
\end{proof}


\begin{thebibliography}{41}%
\makeatletter
\providecommand \@ifxundefined [1]{%
 \@ifx{#1\undefined}
}%
\providecommand \@ifnum [1]{%
 \ifnum #1\expandafter \@firstoftwo
 \else \expandafter \@secondoftwo
 \fi
}%
\providecommand \@ifx [1]{%
 \ifx #1\expandafter \@firstoftwo
 \else \expandafter \@secondoftwo
 \fi
}%
\providecommand \natexlab [1]{#1}%
\providecommand \enquote  [1]{``#1''}%
\providecommand \bibnamefont  [1]{#1}%
\providecommand \bibfnamefont [1]{#1}%
\providecommand \citenamefont [1]{#1}%
\providecommand \href@noop [0]{\@secondoftwo}%
\providecommand \href [0]{\begingroup \@sanitize@url \@href}%
\providecommand \@href[1]{\@@startlink{#1}\@@href}%
\providecommand \@@href[1]{\endgroup#1\@@endlink}%
\providecommand \@sanitize@url [0]{\catcode `\\12\catcode `\$12\catcode
  `\&12\catcode `\#12\catcode `\^12\catcode `\_12\catcode `\%12\relax}%
\providecommand \@@startlink[1]{}%
\providecommand \@@endlink[0]{}%
\providecommand \url  [0]{\begingroup\@sanitize@url \@url }%
\providecommand \@url [1]{\endgroup\@href {#1}{\urlprefix }}%
\providecommand \urlprefix  [0]{URL }%
\providecommand \Eprint [0]{\href }%
\providecommand \doibase [0]{http://dx.doi.org/}%
\providecommand \selectlanguage [0]{\@gobble}%
\providecommand \bibinfo  [0]{\@secondoftwo}%
\providecommand \bibfield  [0]{\@secondoftwo}%
\providecommand \translation [1]{[#1]}%
\providecommand \BibitemOpen [0]{}%
\providecommand \bibitemStop [0]{}%
\providecommand \bibitemNoStop [0]{.\EOS\space}%
\providecommand \EOS [0]{\spacefactor3000\relax}%
\providecommand \BibitemShut  [1]{\csname bibitem#1\endcsname}%
\let\auto@bib@innerbib\@empty
\bibitem [{\citenamefont {Bard}\ and\ \citenamefont
  {Faulkner}(2001)}]{bard2001fundamentals}%
  \BibitemOpen
  \bibfield  {author} {\bibinfo {author} {\bibfnamefont {A.~J.}\ \bibnamefont
  {Bard}}\ and\ \bibinfo {author} {\bibfnamefont {L.~R.}\ \bibnamefont
  {Faulkner}},\ }\href@noop {} {\emph {\bibinfo {title} {Electrochemical
  methods: fundamentals and applications}}},\ \bibinfo {edition} {2nd}\ ed.\
  (\bibinfo  {publisher} {John Wiley \& Sons},\ \bibinfo {year}
  {2001})\BibitemShut {NoStop}%
\bibitem [{\citenamefont {Newman}\ and\ \citenamefont
  {Thomas-Alyea}(2012)}]{newman2012electrochemical}%
  \BibitemOpen
  \bibfield  {author} {\bibinfo {author} {\bibfnamefont {J.}~\bibnamefont
  {Newman}}\ and\ \bibinfo {author} {\bibfnamefont {K.~E.}\ \bibnamefont
  {Thomas-Alyea}},\ }\href@noop {} {\emph {\bibinfo {title} {Electrochemical
  systems}}}\ (\bibinfo  {publisher} {John Wiley \& Sons},\ \bibinfo {year}
  {2012})\BibitemShut {NoStop}%
\bibitem [{\citenamefont {Myland}\ and\ \citenamefont
  {Oldham}(2005)}]{myland2005does}%
  \BibitemOpen
  \bibfield  {author} {\bibinfo {author} {\bibfnamefont {J.~C.}\ \bibnamefont
  {Myland}}\ and\ \bibinfo {author} {\bibfnamefont {K.~B.}\ \bibnamefont
  {Oldham}},\ }\href@noop {} {\bibfield  {journal} {\bibinfo  {journal}
  {Journal of Electroanalytical Chemistry}\ }\textbf {\bibinfo {volume}
  {575}},\ \bibinfo {pages} {81} (\bibinfo {year} {2005})}\BibitemShut
  {NoStop}%
\bibitem [{\citenamefont {Bleda-Mart{\'\i}nez}\ \emph
  {et~al.}(2005)\citenamefont {Bleda-Mart{\'\i}nez}, \citenamefont
  {Maci{\'a}-Agull{\'o}}, \citenamefont {Lozano-Castell{\'o}}, \citenamefont
  {Morall{\'o}n}, \citenamefont {Cazorla-Amor{\'o}s},\ and\ \citenamefont
  {Linares-Solano}}]{bleda2005role}%
  \BibitemOpen
  \bibfield  {author} {\bibinfo {author} {\bibfnamefont {M.}~\bibnamefont
  {Bleda-Mart{\'\i}nez}}, \bibinfo {author} {\bibfnamefont {J.}~\bibnamefont
  {Maci{\'a}-Agull{\'o}}}, \bibinfo {author} {\bibfnamefont {D.}~\bibnamefont
  {Lozano-Castell{\'o}}}, \bibinfo {author} {\bibfnamefont {E.}~\bibnamefont
  {Morall{\'o}n}}, \bibinfo {author} {\bibfnamefont {D.}~\bibnamefont
  {Cazorla-Amor{\'o}s}}, \ and\ \bibinfo {author} {\bibfnamefont
  {A.}~\bibnamefont {Linares-Solano}},\ }\href@noop {} {\bibfield  {journal}
  {\bibinfo  {journal} {Carbon}\ }\textbf {\bibinfo {volume} {43}},\ \bibinfo
  {pages} {2677} (\bibinfo {year} {2005})}\BibitemShut {NoStop}%
\bibitem [{\citenamefont {Pajkossy}\ and\ \citenamefont
  {Kolb}(2001)}]{pajkossy2001double}%
  \BibitemOpen
  \bibfield  {author} {\bibinfo {author} {\bibfnamefont {T.}~\bibnamefont
  {Pajkossy}}\ and\ \bibinfo {author} {\bibfnamefont {D.}~\bibnamefont
  {Kolb}},\ }\href@noop {} {\bibfield  {journal} {\bibinfo  {journal}
  {Electrochimica acta}\ }\textbf {\bibinfo {volume} {46}},\ \bibinfo {pages}
  {3063} (\bibinfo {year} {2001})}\BibitemShut {NoStop}%
\bibitem [{\citenamefont {Gore}\ \emph {et~al.}(2010)\citenamefont {Gore},
  \citenamefont {Bond}, \citenamefont {Zhang}, \citenamefont {Scott},\ and\
  \citenamefont {Burgess}}]{gore2010hysteresis}%
  \BibitemOpen
  \bibfield  {author} {\bibinfo {author} {\bibfnamefont {T.~R.}\ \bibnamefont
  {Gore}}, \bibinfo {author} {\bibfnamefont {T.}~\bibnamefont {Bond}}, \bibinfo
  {author} {\bibfnamefont {W.}~\bibnamefont {Zhang}}, \bibinfo {author}
  {\bibfnamefont {R.~W.}\ \bibnamefont {Scott}}, \ and\ \bibinfo {author}
  {\bibfnamefont {I.~J.}\ \bibnamefont {Burgess}},\ }\href@noop {} {\bibfield
  {journal} {\bibinfo  {journal} {Electrochemistry Communications}\ }\textbf
  {\bibinfo {volume} {12}},\ \bibinfo {pages} {1340} (\bibinfo {year}
  {2010})}\BibitemShut {NoStop}%
\bibitem [{\citenamefont {Sugimoto}\ \emph {et~al.}(2004)\citenamefont
  {Sugimoto}, \citenamefont {Kizaki}, \citenamefont {Yokoshima}, \citenamefont
  {Murakami},\ and\ \citenamefont {Takasu}}]{sugimoto2004evaluation}%
  \BibitemOpen
  \bibfield  {author} {\bibinfo {author} {\bibfnamefont {W.}~\bibnamefont
  {Sugimoto}}, \bibinfo {author} {\bibfnamefont {T.}~\bibnamefont {Kizaki}},
  \bibinfo {author} {\bibfnamefont {K.}~\bibnamefont {Yokoshima}}, \bibinfo
  {author} {\bibfnamefont {Y.}~\bibnamefont {Murakami}}, \ and\ \bibinfo
  {author} {\bibfnamefont {Y.}~\bibnamefont {Takasu}},\ }\href@noop {}
  {\bibfield  {journal} {\bibinfo  {journal} {Electrochimica Acta}\ }\textbf
  {\bibinfo {volume} {49}},\ \bibinfo {pages} {313} (\bibinfo {year}
  {2004})}\BibitemShut {NoStop}%
\bibitem [{\citenamefont {Soon}\ and\ \citenamefont
  {Loh}(2007)}]{soon2007electrochemical}%
  \BibitemOpen
  \bibfield  {author} {\bibinfo {author} {\bibfnamefont {J.~M.}\ \bibnamefont
  {Soon}}\ and\ \bibinfo {author} {\bibfnamefont {K.~P.}\ \bibnamefont {Loh}},\
  }\href@noop {} {\bibfield  {journal} {\bibinfo  {journal} {Electrochemical
  and Solid-State Letters}\ }\textbf {\bibinfo {volume} {10}},\ \bibinfo
  {pages} {A250} (\bibinfo {year} {2007})}\BibitemShut {NoStop}%
\bibitem [{\citenamefont {Grupioni}\ \emph {et~al.}(2002)\citenamefont
  {Grupioni}, \citenamefont {Arashiro},\ and\ \citenamefont
  {Lassali}}]{grupioni2002voltammetric}%
  \BibitemOpen
  \bibfield  {author} {\bibinfo {author} {\bibfnamefont {A.~A.~F.}\
  \bibnamefont {Grupioni}}, \bibinfo {author} {\bibfnamefont {E.}~\bibnamefont
  {Arashiro}}, \ and\ \bibinfo {author} {\bibfnamefont {T.~A.~F.}\ \bibnamefont
  {Lassali}},\ }\href@noop {} {\bibfield  {journal} {\bibinfo  {journal}
  {Electrochimica acta}\ }\textbf {\bibinfo {volume} {48}},\ \bibinfo {pages}
  {407} (\bibinfo {year} {2002})}\BibitemShut {NoStop}%
\bibitem [{\citenamefont {Halper}\ and\ \citenamefont
  {Ellenbogen}(2006)}]{halper2006supercapacitors}%
  \BibitemOpen
  \bibfield  {author} {\bibinfo {author} {\bibfnamefont {M.~S.}\ \bibnamefont
  {Halper}}\ and\ \bibinfo {author} {\bibfnamefont {J.~C.}\ \bibnamefont
  {Ellenbogen}},\ }\href@noop {} {\bibfield  {journal} {\bibinfo  {journal}
  {The MITRE Corporation, McLean, Virginia, USA}\ ,\ \bibinfo {pages} {1}}
  (\bibinfo {year} {2006})}\BibitemShut {NoStop}%
\bibitem [{\citenamefont {Newman}(1966{\natexlab{a}})}]{newman1966resistance}%
  \BibitemOpen
  \bibfield  {author} {\bibinfo {author} {\bibfnamefont {J.}~\bibnamefont
  {Newman}},\ }\href@noop {} {\bibfield  {journal} {\bibinfo  {journal}
  {Journal of The Electrochemical Society}\ }\textbf {\bibinfo {volume}
  {113}},\ \bibinfo {pages} {501} (\bibinfo {year}
  {1966}{\natexlab{a}})}\BibitemShut {NoStop}%
\bibitem [{\citenamefont {Newman}(1966{\natexlab{b}})}]{newman1966current}%
  \BibitemOpen
  \bibfield  {author} {\bibinfo {author} {\bibfnamefont {J.}~\bibnamefont
  {Newman}},\ }\href@noop {} {\bibfield  {journal} {\bibinfo  {journal}
  {Journal of the Electrochemical Society}\ }\textbf {\bibinfo {volume}
  {113}},\ \bibinfo {pages} {1235} (\bibinfo {year}
  {1966}{\natexlab{b}})}\BibitemShut {NoStop}%
\bibitem [{\citenamefont {West}\ and\ \citenamefont
  {Newman}(1989)}]{west1989current}%
  \BibitemOpen
  \bibfield  {author} {\bibinfo {author} {\bibfnamefont {A.~C.}\ \bibnamefont
  {West}}\ and\ \bibinfo {author} {\bibfnamefont {J.}~\bibnamefont {Newman}},\
  }\href@noop {} {\bibfield  {journal} {\bibinfo  {journal} {Journal of The
  Electrochemical Society}\ }\textbf {\bibinfo {volume} {136}},\ \bibinfo
  {pages} {2935} (\bibinfo {year} {1989})}\BibitemShut {NoStop}%
\bibitem [{\citenamefont {Albery}\ \emph {et~al.}(1971)\citenamefont {Albery},
  \citenamefont {Hitchman},\ and\ \citenamefont {Albery}}]{albery1971ring}%
  \BibitemOpen
  \bibfield  {author} {\bibinfo {author} {\bibfnamefont {W.~J.}\ \bibnamefont
  {Albery}}, \bibinfo {author} {\bibfnamefont {M.~L.}\ \bibnamefont
  {Hitchman}}, \ and\ \bibinfo {author} {\bibfnamefont {W.~J.}\ \bibnamefont
  {Albery}},\ }\href@noop {} {\emph {\bibinfo {title} {Ring-disc electrodes}}}\
  (\bibinfo  {publisher} {Clarendon Press Oxford},\ \bibinfo {year}
  {1971})\BibitemShut {NoStop}%
\bibitem [{\citenamefont {Weaver}\ and\ \citenamefont
  {Chizmadzhev}(1996)}]{weaver1996theory}%
  \BibitemOpen
  \bibfield  {author} {\bibinfo {author} {\bibfnamefont {J.~C.}\ \bibnamefont
  {Weaver}}\ and\ \bibinfo {author} {\bibfnamefont {Y.~A.}\ \bibnamefont
  {Chizmadzhev}},\ }\href@noop {} {\bibfield  {journal} {\bibinfo  {journal}
  {Bioelectrochemistry and bioenergetics}\ }\textbf {\bibinfo {volume} {41}},\
  \bibinfo {pages} {135} (\bibinfo {year} {1996})}\BibitemShut {NoStop}%
\bibitem [{\citenamefont {Lasia}(2002)}]{lasia2002electrochemical}%
  \BibitemOpen
  \bibfield  {author} {\bibinfo {author} {\bibfnamefont {A.}~\bibnamefont
  {Lasia}},\ }in\ \href@noop {} {\emph {\bibinfo {booktitle} {Modern aspects of
  electrochemistry}}}\ (\bibinfo  {publisher} {Springer},\ \bibinfo {year}
  {2002})\ pp.\ \bibinfo {pages} {143--248}\BibitemShut {NoStop}%
\bibitem [{\citenamefont {Boinagrov}\ \emph {et~al.}(2015)\citenamefont
  {Boinagrov}, \citenamefont {Lei}, \citenamefont {Goetz}, \citenamefont
  {Kamins}, \citenamefont {Mathieson}, \citenamefont {Galambos}, \citenamefont
  {Harris},\ and\ \citenamefont {Palanker}}]{boinagrov2015photovoltaic}%
  \BibitemOpen
  \bibfield  {author} {\bibinfo {author} {\bibfnamefont {D.}~\bibnamefont
  {Boinagrov}}, \bibinfo {author} {\bibfnamefont {X.}~\bibnamefont {Lei}},
  \bibinfo {author} {\bibfnamefont {G.}~\bibnamefont {Goetz}}, \bibinfo
  {author} {\bibfnamefont {T.~I.}\ \bibnamefont {Kamins}}, \bibinfo {author}
  {\bibfnamefont {K.}~\bibnamefont {Mathieson}}, \bibinfo {author}
  {\bibfnamefont {L.}~\bibnamefont {Galambos}}, \bibinfo {author}
  {\bibfnamefont {J.~S.}\ \bibnamefont {Harris}}, \ and\ \bibinfo {author}
  {\bibfnamefont {D.}~\bibnamefont {Palanker}},\ }\href@noop {} {\bibfield
  {journal} {\bibinfo  {journal} {IEEE transactions on biomedical circuits and
  systems}\ }\textbf {\bibinfo {volume} {10}},\ \bibinfo {pages} {85} (\bibinfo
  {year} {2015})}\BibitemShut {NoStop}%
\bibitem [{\citenamefont {Nisanciog{\u{g}}lu}\ and\ \citenamefont
  {Newman}(1973{\natexlab{a}})}]{nisancioguglu1973transient_v}%
  \BibitemOpen
  \bibfield  {author} {\bibinfo {author} {\bibfnamefont {K.}~\bibnamefont
  {Nisanciog{\u{g}}lu}}\ and\ \bibinfo {author} {\bibfnamefont
  {J.}~\bibnamefont {Newman}},\ }\href@noop {} {\bibfield  {journal} {\bibinfo
  {journal} {Journal of The Electrochemical Society}\ }\textbf {\bibinfo
  {volume} {120}},\ \bibinfo {pages} {1356} (\bibinfo {year}
  {1973}{\natexlab{a}})}\BibitemShut {NoStop}%
\bibitem [{\citenamefont {Nisanciog{\u{g}}lu}\ and\ \citenamefont
  {Newman}(1973{\natexlab{b}})}]{nisancioguglu1973transient}%
  \BibitemOpen
  \bibfield  {author} {\bibinfo {author} {\bibfnamefont {K.}~\bibnamefont
  {Nisanciog{\u{g}}lu}}\ and\ \bibinfo {author} {\bibfnamefont
  {J.}~\bibnamefont {Newman}},\ }\href@noop {} {\bibfield  {journal} {\bibinfo
  {journal} {Journal of The Electrochemical Society}\ }\textbf {\bibinfo
  {volume} {120}},\ \bibinfo {pages} {1339} (\bibinfo {year}
  {1973}{\natexlab{b}})}\BibitemShut {NoStop}%
\bibitem [{\citenamefont {Newman}(1970)}]{newman1970frequency}%
  \BibitemOpen
  \bibfield  {author} {\bibinfo {author} {\bibfnamefont {J.}~\bibnamefont
  {Newman}},\ }\href@noop {} {\bibfield  {journal} {\bibinfo  {journal}
  {Journal of the Electrochemical Society}\ }\textbf {\bibinfo {volume}
  {117}},\ \bibinfo {pages} {198} (\bibinfo {year} {1970})}\BibitemShut
  {NoStop}%
\bibitem [{\citenamefont {Behrend}\ \emph {et~al.}(2008)\citenamefont
  {Behrend}, \citenamefont {Ahuja},\ and\ \citenamefont
  {Weiland}}]{behrend2008dynamic}%
  \BibitemOpen
  \bibfield  {author} {\bibinfo {author} {\bibfnamefont {M.~R.}\ \bibnamefont
  {Behrend}}, \bibinfo {author} {\bibfnamefont {A.~K.}\ \bibnamefont {Ahuja}},
  \ and\ \bibinfo {author} {\bibfnamefont {J.~D.}\ \bibnamefont {Weiland}},\
  }\href@noop {} {\bibfield  {journal} {\bibinfo  {journal} {IEEE Transactions
  on Biomedical Engineering}\ }\textbf {\bibinfo {volume} {55}},\ \bibinfo
  {pages} {1056} (\bibinfo {year} {2008})}\BibitemShut {NoStop}%
\bibitem [{\citenamefont {Cantrell}\ \emph {et~al.}(2007)\citenamefont
  {Cantrell}, \citenamefont {Inayat}, \citenamefont {Taflove}, \citenamefont
  {Ruoff},\ and\ \citenamefont {Troy}}]{cantrell2007incorporation}%
  \BibitemOpen
  \bibfield  {author} {\bibinfo {author} {\bibfnamefont {D.~R.}\ \bibnamefont
  {Cantrell}}, \bibinfo {author} {\bibfnamefont {S.}~\bibnamefont {Inayat}},
  \bibinfo {author} {\bibfnamefont {A.}~\bibnamefont {Taflove}}, \bibinfo
  {author} {\bibfnamefont {R.~S.}\ \bibnamefont {Ruoff}}, \ and\ \bibinfo
  {author} {\bibfnamefont {J.~B.}\ \bibnamefont {Troy}},\ }\href@noop {}
  {\bibfield  {journal} {\bibinfo  {journal} {Journal of neural engineering}\
  }\textbf {\bibinfo {volume} {5}},\ \bibinfo {pages} {54} (\bibinfo {year}
  {2007})}\BibitemShut {NoStop}%
\bibitem [{\citenamefont {Myland}\ and\ \citenamefont
  {Oldham}(2014)}]{myland2014excess}%
  \BibitemOpen
  \bibfield  {author} {\bibinfo {author} {\bibfnamefont {J.~C.}\ \bibnamefont
  {Myland}}\ and\ \bibinfo {author} {\bibfnamefont {K.~B.}\ \bibnamefont
  {Oldham}},\ }\href@noop {} {\bibfield  {journal} {\bibinfo  {journal}
  {Journal of Solid State Electrochemistry}\ }\textbf {\bibinfo {volume}
  {18}},\ \bibinfo {pages} {3259} (\bibinfo {year} {2014})}\BibitemShut
  {NoStop}%
\bibitem [{\citenamefont {Bieniasz}(2015)}]{bieniasz2015theory}%
  \BibitemOpen
  \bibfield  {author} {\bibinfo {author} {\bibfnamefont {L.}~\bibnamefont
  {Bieniasz}},\ }\href@noop {} {\bibfield  {journal} {\bibinfo  {journal}
  {Electrochimica Acta}\ }\textbf {\bibinfo {volume} {178}},\ \bibinfo {pages}
  {25} (\bibinfo {year} {2015})}\BibitemShut {NoStop}%
\bibitem [{\citenamefont {Sue}\ \emph {et~al.}(2015)\citenamefont {Sue},
  \citenamefont {Wong}, \citenamefont {Tran}, \citenamefont {Li},\ and\
  \citenamefont {Carter}}]{sue2015modeling}%
  \BibitemOpen
  \bibfield  {author} {\bibinfo {author} {\bibfnamefont {A.}~\bibnamefont
  {Sue}}, \bibinfo {author} {\bibfnamefont {P.}~\bibnamefont {Wong}}, \bibinfo
  {author} {\bibfnamefont {P.}~\bibnamefont {Tran}}, \bibinfo {author}
  {\bibfnamefont {Q.}~\bibnamefont {Li}}, \ and\ \bibinfo {author}
  {\bibfnamefont {P.}~\bibnamefont {Carter}},\ }in\ \href@noop {} {\emph
  {\bibinfo {booktitle} {2015 7th International IEEE/EMBS Conference on Neural
  Engineering (NER)}}}\ (\bibinfo {organization} {IEEE},\ \bibinfo {year}
  {2015})\ pp.\ \bibinfo {pages} {490--493}\BibitemShut {NoStop}%
\bibitem [{\citenamefont {Conway}\ and\ \citenamefont
  {Pell}(2003)}]{conway2003double}%
  \BibitemOpen
  \bibfield  {author} {\bibinfo {author} {\bibfnamefont {B.}~\bibnamefont
  {Conway}}\ and\ \bibinfo {author} {\bibfnamefont {W.}~\bibnamefont {Pell}},\
  }\href@noop {} {\bibfield  {journal} {\bibinfo  {journal} {Journal of Solid
  State Electrochemistry}\ }\textbf {\bibinfo {volume} {7}},\ \bibinfo {pages}
  {637} (\bibinfo {year} {2003})}\BibitemShut {NoStop}%
 \bibitem [{\citenamefont {Barsoukov}\ and\ \citenamefont
  {Macdonald}(2005)}]{barsoukov2005impedance}%
  \BibitemOpen
  \bibfield  {author} {\bibinfo {author} {\bibfnamefont {E.}~\bibnamefont
  {Barsoukov}}\ and\ \bibinfo {author} {\bibfnamefont {J.~R.}\ \bibnamefont
  {Macdonald}},\ }\href@noop {} {\emph {\bibinfo {title} {Impedance
  spectroscopy: theory, experiment, and applications}}}\ (\bibinfo  {publisher}
  {John Wiley \& Sons},\ \bibinfo {year} {2005})\BibitemShut {NoStop}%
\bibitem [{\citenamefont {Wang}(2016)}]{wang2016investigation}%
  \BibitemOpen
  \bibfield  {author} {\bibinfo {author} {\bibfnamefont {B.}~\bibnamefont
  {Wang}},\ }\emph {\bibinfo {title} {Investigation of the electrode-tissue
  interface of retinal prostheses}},\ \href@noop {} {Ph.D. thesis},\ \bibinfo
  {school} {University of Southern California} (\bibinfo {year}
  {2016})\BibitemShut {NoStop}%
\bibitem [{\citenamefont {Oldham}(2004)}]{oldham2004rc}%
  \BibitemOpen
  \bibfield  {author} {\bibinfo {author} {\bibfnamefont {K.~B.}\ \bibnamefont
  {Oldham}},\ }\href@noop {} {\bibfield  {journal} {\bibinfo  {journal}
  {Electrochemistry Communications}\ }\textbf {\bibinfo {volume} {6}},\
  \bibinfo {pages} {210} (\bibinfo {year} {2004})}\BibitemShut {NoStop}%
\bibitem [{\citenamefont {Ciftja}\ and\ \citenamefont
  {Hysi}(2011)}]{ciftja2011electrostatic}%
  \BibitemOpen
  \bibfield  {author} {\bibinfo {author} {\bibfnamefont {O.}~\bibnamefont
  {Ciftja}}\ and\ \bibinfo {author} {\bibfnamefont {I.}~\bibnamefont {Hysi}},\
  }\href@noop {} {\bibfield  {journal} {\bibinfo  {journal} {Applied
  Mathematics Letters}\ }\textbf {\bibinfo {volume} {24}},\ \bibinfo {pages}
  {1919} (\bibinfo {year} {2011})}\BibitemShut {NoStop}%
\bibitem [{\citenamefont {Cogan}\ \emph {et~al.}(2009)\citenamefont {Cogan},
  \citenamefont {Ehrlich}, \citenamefont {Plante}, \citenamefont {Smirnov},
  \citenamefont {Shire}, \citenamefont {Gingerich},\ and\ \citenamefont
  {Rizzo}}]{cogan2009sputtered}%
  \BibitemOpen
  \bibfield  {author} {\bibinfo {author} {\bibfnamefont {S.~F.}\ \bibnamefont
  {Cogan}}, \bibinfo {author} {\bibfnamefont {J.}~\bibnamefont {Ehrlich}},
  \bibinfo {author} {\bibfnamefont {T.~D.}\ \bibnamefont {Plante}}, \bibinfo
  {author} {\bibfnamefont {A.}~\bibnamefont {Smirnov}}, \bibinfo {author}
  {\bibfnamefont {D.~B.}\ \bibnamefont {Shire}}, \bibinfo {author}
  {\bibfnamefont {M.}~\bibnamefont {Gingerich}}, \ and\ \bibinfo {author}
  {\bibfnamefont {J.~F.}\ \bibnamefont {Rizzo}},\ }\href@noop {} {\bibfield
  {journal} {\bibinfo  {journal} {Journal of Biomedical Materials Research Part
  B: Applied Biomaterials: An Official Journal of The Society for Biomaterials,
  The Japanese Society for Biomaterials, and The Australian Society for
  Biomaterials and the Korean Society for Biomaterials}\ }\textbf {\bibinfo
  {volume} {89}},\ \bibinfo {pages} {353} (\bibinfo {year} {2009})}\BibitemShut
  {NoStop}%
\bibitem [{\citenamefont {Pauport{\'e}}\ \emph {et~al.}(1999)\citenamefont
  {Pauport{\'e}}, \citenamefont {Aberdam}, \citenamefont {Hazemann},
  \citenamefont {Faure},\ and\ \citenamefont {Durand}}]{pauporte1999x}%
  \BibitemOpen
  \bibfield  {author} {\bibinfo {author} {\bibfnamefont {T.}~\bibnamefont
  {Pauport{\'e}}}, \bibinfo {author} {\bibfnamefont {D.}~\bibnamefont
  {Aberdam}}, \bibinfo {author} {\bibfnamefont {J.-L.}\ \bibnamefont
  {Hazemann}}, \bibinfo {author} {\bibfnamefont {R.}~\bibnamefont {Faure}}, \
  and\ \bibinfo {author} {\bibfnamefont {R.}~\bibnamefont {Durand}},\
  }\href@noop {} {\bibfield  {journal} {\bibinfo  {journal} {Journal of
  Electroanalytical Chemistry}\ }\textbf {\bibinfo {volume} {465}},\ \bibinfo
  {pages} {88} (\bibinfo {year} {1999})}\BibitemShut {NoStop}%
\bibitem [{\citenamefont {Cogan}(2008)}]{cogan2008neural}%
  \BibitemOpen
  \bibfield  {author} {\bibinfo {author} {\bibfnamefont {S.~F.}\ \bibnamefont
  {Cogan}},\ }\href@noop {} {\bibfield  {journal} {\bibinfo  {journal} {Annu.
  Rev. Biomed. Eng.}\ }\textbf {\bibinfo {volume} {10}},\ \bibinfo {pages}
  {275} (\bibinfo {year} {2008})}\BibitemShut {NoStop}%
\bibitem [{\citenamefont {Duru}(2019)}]{Jens2019}%
  \BibitemOpen
  \bibfield  {author} {\bibinfo {author} {\bibfnamefont {J.}~\bibnamefont
  {Duru}},\ }\emph {\bibinfo {title} {{Characterization of 3-D electro-neural
  interfaces used in retinal prosthesis}}},\ \href@noop {} {Master's thesis},\
  \bibinfo  {school} {ETH Z\"{u}rich} (\bibinfo {year} {2019})\BibitemShut
  {NoStop}%
\bibitem [{\citenamefont {Troesch}\ and\ \citenamefont
  {Troesch}(1972)}]{troesch1972remark}%
  \BibitemOpen
  \bibfield  {author} {\bibinfo {author} {\bibfnamefont {B.~A.}\ \bibnamefont
  {Troesch}}\ and\ \bibinfo {author} {\bibfnamefont {R.}~\bibnamefont
  {Troesch}},\ }\href@noop {} {\bibfield  {journal} {\bibinfo  {journal}
  {Zeitschrift f{\"u}r angewandte Mathematik und Physik ZAMP}\ }\textbf
  {\bibinfo {volume} {23}},\ \bibinfo {pages} {703} (\bibinfo {year}
  {1972})}\BibitemShut {NoStop}%
\bibitem [{\citenamefont {Flores}\ \emph {et~al.}(2016)\citenamefont {Flores},
  \citenamefont {Goetz}, \citenamefont {Lei},\ and\ \citenamefont
  {Palanker}}]{flores2016optimization}%
  \BibitemOpen
  \bibfield  {author} {\bibinfo {author} {\bibfnamefont {T.}~\bibnamefont
  {Flores}}, \bibinfo {author} {\bibfnamefont {G.}~\bibnamefont {Goetz}},
  \bibinfo {author} {\bibfnamefont {X.}~\bibnamefont {Lei}}, \ and\ \bibinfo
  {author} {\bibfnamefont {D.}~\bibnamefont {Palanker}},\ }\href@noop {}
  {\bibfield  {journal} {\bibinfo  {journal} {Journal of neural engineering}\
  }\textbf {\bibinfo {volume} {13}},\ \bibinfo {pages} {036010} (\bibinfo
  {year} {2016})}\BibitemShut {NoStop}%
\bibitem [{\citenamefont {Flores}\ \emph
  {et~al.}(2018{\natexlab{a}})\citenamefont {Flores}, \citenamefont {Lei},
  \citenamefont {Huang}, \citenamefont {Lorach}, \citenamefont {Dalal},
  \citenamefont {Galambos}, \citenamefont {Kamins}, \citenamefont {Mathieson},\
  and\ \citenamefont {Palanker}}]{flores2018optimization}%
  \BibitemOpen
  \bibfield  {author} {\bibinfo {author} {\bibfnamefont {T.}~\bibnamefont
  {Flores}}, \bibinfo {author} {\bibfnamefont {X.}~\bibnamefont {Lei}},
  \bibinfo {author} {\bibfnamefont {T.}~\bibnamefont {Huang}}, \bibinfo
  {author} {\bibfnamefont {H.}~\bibnamefont {Lorach}}, \bibinfo {author}
  {\bibfnamefont {R.}~\bibnamefont {Dalal}}, \bibinfo {author} {\bibfnamefont
  {L.}~\bibnamefont {Galambos}}, \bibinfo {author} {\bibfnamefont
  {T.}~\bibnamefont {Kamins}}, \bibinfo {author} {\bibfnamefont
  {K.}~\bibnamefont {Mathieson}}, \ and\ \bibinfo {author} {\bibfnamefont
  {D.}~\bibnamefont {Palanker}},\ }\href@noop {} {\bibfield  {journal}
  {\bibinfo  {journal} {Journal of neural engineering}\ }\textbf {\bibinfo
  {volume} {15}},\ \bibinfo {pages} {036011} (\bibinfo {year}
  {2018}{\natexlab{a}})}\BibitemShut {NoStop}%
\bibitem [{\citenamefont {Flores}\ \emph
  {et~al.}(2018{\natexlab{b}})\citenamefont {Flores}, \citenamefont {Huang},
  \citenamefont {Lorach}, \citenamefont {Dalal}, \citenamefont {Lei},
  \citenamefont {Kamins}, \citenamefont {Mathieson},\ and\ \citenamefont
  {Palanker}}]{flores2018vertical}%
  \BibitemOpen
  \bibfield  {author} {\bibinfo {author} {\bibfnamefont {T.}~\bibnamefont
  {Flores}}, \bibinfo {author} {\bibfnamefont {T.~W.}\ \bibnamefont {Huang}},
  \bibinfo {author} {\bibfnamefont {H.}~\bibnamefont {Lorach}}, \bibinfo
  {author} {\bibfnamefont {R.}~\bibnamefont {Dalal}}, \bibinfo {author}
  {\bibfnamefont {X.}~\bibnamefont {Lei}}, \bibinfo {author} {\bibfnamefont
  {T.}~\bibnamefont {Kamins}}, \bibinfo {author} {\bibfnamefont
  {K.}~\bibnamefont {Mathieson}}, \ and\ \bibinfo {author} {\bibfnamefont
  {D.~V.}\ \bibnamefont {Palanker}},\ }\href@noop {} {\bibfield  {journal}
  {\bibinfo  {journal} {Investigative Ophthalmology \& Visual Science}\
  }\textbf {\bibinfo {volume} {59}},\ \bibinfo {pages} {3975} (\bibinfo {year}
  {2018}{\natexlab{b}})}\BibitemShut {NoStop}%
\bibitem [{\citenamefont {Ho}\ \emph {et~al.}(2018)\citenamefont {Ho},
  \citenamefont {Lorach}, \citenamefont {Huang}, \citenamefont {Lei},
  \citenamefont {Flores}, \citenamefont {Kamins}, \citenamefont {Galambos},
  \citenamefont {Mathieson},\ and\ \citenamefont {Palanker}}]{ho2018grating}%
  \BibitemOpen
  \bibfield  {author} {\bibinfo {author} {\bibfnamefont {E.}~\bibnamefont
  {Ho}}, \bibinfo {author} {\bibfnamefont {H.}~\bibnamefont {Lorach}}, \bibinfo
  {author} {\bibfnamefont {T.~W.}\ \bibnamefont {Huang}}, \bibinfo {author}
  {\bibfnamefont {X.}~\bibnamefont {Lei}}, \bibinfo {author} {\bibfnamefont
  {T.}~\bibnamefont {Flores}}, \bibinfo {author} {\bibfnamefont
  {T.}~\bibnamefont {Kamins}}, \bibinfo {author} {\bibfnamefont
  {L.}~\bibnamefont {Galambos}}, \bibinfo {author} {\bibfnamefont
  {K.}~\bibnamefont {Mathieson}}, \ and\ \bibinfo {author} {\bibfnamefont
  {D.~V.}\ \bibnamefont {Palanker}},\ }\href@noop {} {\bibfield  {journal}
  {\bibinfo  {journal} {Investigative Ophthalmology \& Visual Science}\
  }\textbf {\bibinfo {volume} {59}},\ \bibinfo {pages} {3977} (\bibinfo {year}
  {2018})}\BibitemShut {NoStop}%
\bibitem [{\citenamefont {Musk}\ \emph {et~al.}(2019)\citenamefont {Musk} \emph
  {et~al.}}]{musk2019integrated}%
  \BibitemOpen
  \bibfield  {author} {\bibinfo {author} {\bibfnamefont {E.}~\bibnamefont
  {Musk}} \emph {et~al.},\ }\href@noop {} {\bibfield  {journal} {\bibinfo
  {journal} {BioRxiv}\ ,\ \bibinfo {pages} {703801}} (\bibinfo {year}
  {2019})}\BibitemShut {NoStop}%
\bibitem [{\citenamefont {Wang}\ \emph {et~al.}(2014)\citenamefont {Wang},
  \citenamefont {Petrossians},\ and\ \citenamefont
  {Weiland}}]{wang2014reduction}%
  \BibitemOpen
  \bibfield  {author} {\bibinfo {author} {\bibfnamefont {B.}~\bibnamefont
  {Wang}}, \bibinfo {author} {\bibfnamefont {A.}~\bibnamefont {Petrossians}}, \
  and\ \bibinfo {author} {\bibfnamefont {J.~D.}\ \bibnamefont {Weiland}},\
  }\href@noop {} {\bibfield  {journal} {\bibinfo  {journal} {IEEE Transactions
  on Biomedical Engineering}\ }\textbf {\bibinfo {volume} {61}},\ \bibinfo
  {pages} {2254} (\bibinfo {year} {2014})}\BibitemShut {NoStop}%
\bibitem [{\citenamefont {Arfken}\ and\ \citenamefont
  {Weber}(2005)}]{arfken2005mathematical}%
  \BibitemOpen
  \bibfield  {author} {\bibinfo {author} {\bibfnamefont {G.~B.}\ \bibnamefont
  {Arfken}}\ and\ \bibinfo {author} {\bibfnamefont {H.}~\bibnamefont {Weber}},\
  }\href@noop {} {\emph {\bibinfo {title} {Mathematical Methods for Physicists,
  Burlington, MA}}}\ (\bibinfo  {publisher} {Elsevier},\ \bibinfo {year}
  {2005})\BibitemShut {NoStop}%
\end{thebibliography}
\providecommand{\noopsort}[1]{}\providecommand{\singleletter}[1]{#1}%

\end{document}